\documentclass[10pt, a4paper, logo]{googledeepmind} %
\usepackage{times}

\usepackage{hyperref}
\usepackage[leftcaption]{sidecap} 

\usepackage{amsmath,amsfonts,bm}
\usepackage{natbib}
\usepackage{fancyhdr}

\def\eqref#1{equation~\ref{#1}}

\def\1{\bm{1}}

\DeclareMathAlphabet{\mathsfit}{\encodingdefault}{\sfdefault}{m}{sl}
\SetMathAlphabet{\mathsfit}{bold}{\encodingdefault}{\sfdefault}{bx}{n}

\usepackage{hyperref}
\usepackage{url}
\usepackage{booktabs}
\usepackage{graphicx}
\usepackage{subcaption}
\usepackage{amssymb}
\usepackage[most]{tcolorbox}
\usepackage[table]{xcolor}
\tcbuselibrary{breakable} 
\usepackage{amsmath} 
\usepackage{siunitx}
\usepackage{colortbl}

\usepackage{microtype}
\usepackage{color}
\usepackage{wrapfig}
\usepackage{natbib}
\usepackage{colortbl}
\usepackage{microtype}
\usepackage{graphicx}
\usepackage{booktabs} 
\usepackage{array}
\usepackage{textcomp}
\usepackage{stfloats}
\usepackage{float}
\usepackage{verbatim}
\usepackage[ruled, linesnumbered, lined]{algorithm2e}
\usepackage{multirow}
\usepackage{enumitem}

\definecolor{BestColor}{HTML}{C8E6C9}  
\definecolor{SecondBestColor}{HTML}{FFF9C4} 

\usepackage{tcolorbox}
\usepackage{amsmath,amsfonts}

\definecolor{ggg}{RGB}{26,179,0}
\definecolor{rrr}{RGB}{179,0,0}
\definecolor{oodc}{RGB}{31,73,121}
\definecolor{idc}{RGB}{68,142,68}

\definecolor{mygray}{gray}{0.9}

\usepackage{makecell}
\newtheorem{theorem}{Theorem}
\usepackage{amsthm}
\usepackage{algorithmic}

\newtheorem{proposition}{Proposition} 

\def\Bias#1#2{\bm{b}}

\newtcolorbox{examplebox}[2][]{ 
    breakable, 
    enhanced, 
    colback=white, 
    colframe=cyan, 
    coltitle=white, 
    fonttitle=\bfseries, 
    title=#2, 
    overlay middle={\draw[cyan, line width=1pt](frame.south west)--(frame.south east);}, 
    overlay last={\draw[cyan, line width=1pt](frame.south west)--(frame.south east);}, 
    #1 
}

\usepackage[T1]{fontenc}
\usepackage{booktabs}      
\usepackage{graphicx}      
\usepackage[table]{xcolor} 
\usepackage{siunitx}       
\usepackage{etoolbox}      
\usepackage[normalem]{ulem}     

\definecolor{impcolor}{HTML}{2E8B57} 

\newcommand{\improvementstyle}[1]{$^{\textcolor{impcolor}{\tiny #1}}$}

\newcommand{\scoreimp}[2]{%
  \textbf{#1}%
  \ifstrequal{#2}{+0.0}{}{%
    \ifstrequal{#2}{0.0}{}{%
      \makebox[0pt][l]{\improvementstyle{#2}}%
    }%
  }%
}

\title{ECHO: Elastic Speculative Decoding with Sparse Gating for High-Concurrency Scenarios}

\author[1]{Xinyi Hu$^{*}$}
\author[1,2]{Yuhao Shen$^{*}$}
\author[1]{Baolin Zhang$^{*}$}
\author[1]{Hengxin Zhang}
\author[1]{Jun Dai}
\author[1]{Shuang Ge\textsuperscript{\dag}}
\author[1]{Lei Chen}
\author[1]{Yue Li}
\author[1]{Mingcheng Wan}
\affil[1]{Qwen Applications Business Group of Alibaba}
\affil[2]{Zhejiang University}

\begin{abstract}
Speculative Decoding promises to accelerate the inference of Large Language Models, yet its efficacy often degrades in production-grade serving. Existing evaluations typically overlook the compute-bound nature of high-concurrency regimes, where verification compute becomes the dominant bottleneck. Consequently, prior methods face a dilemma: static trees incur massive verification waste, while dynamic trees suffer from cumulative misjudgments and kernel incompatibility.
To bridge this gap, we introduce \textit{ECHO}, a high concurrency-oriented framework integrated into \texttt{SGLang} that reformulates speculative execution as a budgeted scheduling problem.
Crucially, \textit{ECHO} employs sparse confidence gating to manage the batch as a unified super-tree, elastically pivoting budget between depth and width to co-optimize the trade-off between reducing global verification steps and maximizing per-step efficiency.
Extensive evaluations across diverse model scales—particularly the industrial-grade Qwen3-235B—demonstrate that \textit{ECHO} consistently outperforms SOTA methods in both low-load and high-load scenarios, achieving up to 5.35$\times$ walltime speedup and delivering over 20\% relative speedup gain.
\end{abstract}

\begin{document}
\maketitle
\section{Introduction}
As Large Language Model (LLM) scales toward massive parameter counts, high-concurrency, and long-context scenarios, the inefficiency of autoregressive (AR) decoding becomes a first-order bottleneck for production-level system~\citep{yang2025qwen3,guo2025deepseek,singh2025openai,gemini2025modelcard,sadhukhan2024magicdec}. 
Speculative Decoding (SD) mitigates this serial barrier via a draft then verify paradigm~\citep{leviathan2023fast, chen2023accelerating}: a lightweight draft model predicts multiple tokens ahead, and the target model verifies them in parallel, amortizing expensive target computation across multiple accepted tokens. 
Tree-based drafting further enlarges the candidate set within each verification step~\citep{li2025eagle,miao2024specinfer}, often improving the accepted length per step in low-concurrency settings. 
However, we observe that these gains degrade precipitously in high-concurrency scenarios scenarios, as shown in Figure~\ref{fig:batch_scaling}.

\begin{figure}[h]
\begin{center}
\centerline{\includegraphics[width=0.8\columnwidth]{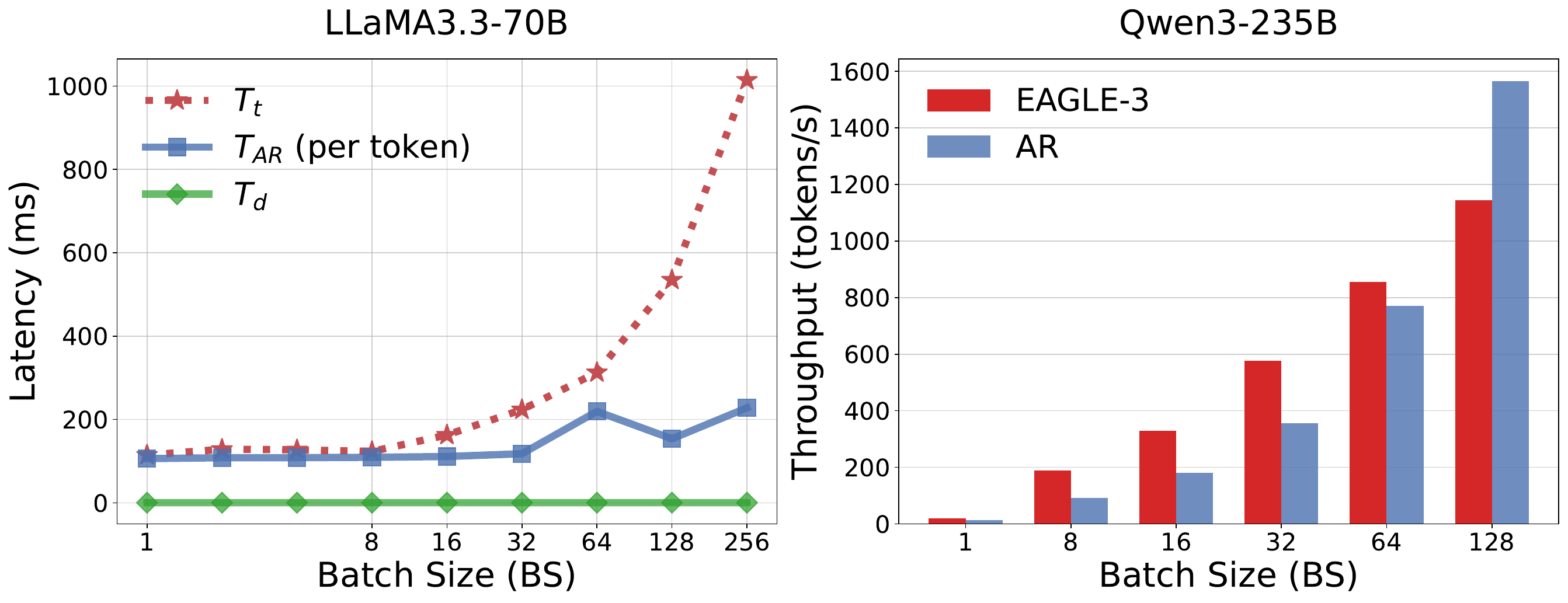}}
\caption{\textbf{Performance degradation in high-concurrency scenarios on MT-Bench.}
\textbf{Left:} Latency breakdown for LLaMA-3.3-70B. Verification cost ($T_t$) scales linearly, becoming the dominant bottleneck over AR cost ($T_{AR}$) as batch size increases.
\textbf{Right:} Throughput for Qwen3-235B. Due to this bottleneck, the speedup of EAGLE-3 diminishes and eventually underperforms vanilla AR at high concurrency (BS=128).
} 
\label{fig:batch_scaling}
\end{center} 
\vspace*{-0.3in}
\end{figure}

This degradation is not incidental: modern serving operates in an increasingly compute-bound regime, where many requests contend for the target model's verification compute, and any wasted verified token directly translates into lost goodput and worse tail latency.
This motivates a broader question beyond any single tree heuristic:

\textbf{\textit{How can we establish a principled methodology for SD in LLM serving, where verification compute is the primary bottleneck and system-level constraints dictate whether algorithmic efficiency translates into actual goodput gains?}}

Revisiting prior work reveals a core conflict in high-concurrency scenarios: the trade-off between the verification waste of static methods and the cost of misjudgment in dynamic methods.
Static tree methods~\citep{miao2024specinfer,li2024eagle2,li2025eagle} use fixed tree structures. While simple, they often verify too many useless branches when the model is uncertain. This leads to significant waste: although they aim to reduce the total number of steps, the large tree size makes each single verification step overly time-consuming.
Dynamic tree strategies~\citep{liu2026talon,brown2024dynamic} try to reduce this waste by adjusting the tree online, but they face severe challenges in production. First, prediction errors often accumulate in high-concurrency settings, canceling out the expected speed gains. Second, they create irregular batch shapes, which are not supported by modern SD frameworks.
System schedulers~\citep{wu2025tetris,liu2025turbospec} manage global budgets but typically treat the draft tree as a ``black box,'' ignoring the inefficiency within the tree construction itself.
In short, the community still lacks a production-ready solution that minimizes both verification waste and the extra costs of dynamic control.

\begin{figure}[h]
\begin{center}
\centerline{\includegraphics[width=0.8\columnwidth]{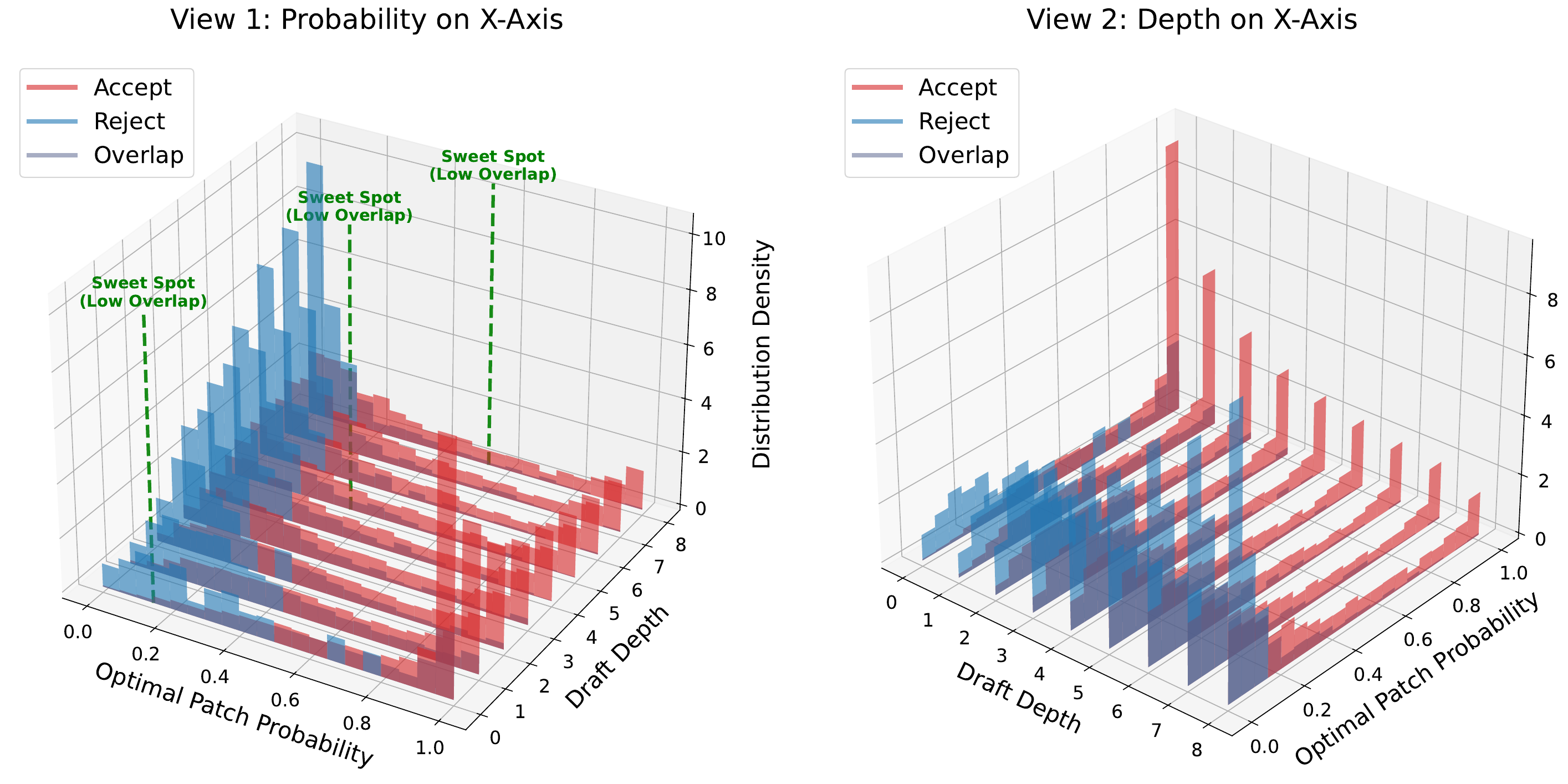}}
\caption{\textbf{Visualization of confidence distributions across draft depths} (LLaMA3.1-8B on MT-Bench). The two views show the probability of accepted and rejected draft tokens at different tree depths. Low overlap indicates better discriminability of the confidence scores, so these layers are identified as ``Sweet Spots''.}
\label{fig:Layer-wise confidence density}
\end{center} 
\vspace*{-0.3in}
\end{figure}

It becomes clear that effective SD in production should take care of two factors: (1) \textbf{reducing global verification steps}, and (2) \textbf{increasing per-step verification efficiency}. Prior methods typically optimize one metric at the expense of the other, lacking a precise mechanism to reconcile this delicate trade-off.
To resolve this tension, we first investigate why the per-step efficiency gains from dynamic adjustment fail to offset the overhead of increased verification steps. 
We identify the root cause as the frequent misjudgment present in current expansion policies.
As illustrated in Figure~\ref{fig:Layer-wise confidence density}, the discriminative power of confidence scores is inherently depth-dependent: the root and target depth consistently act as high-fidelity ``sweet spots'' with sharp signal-to-noise ratios, whereas most intermediate layers exhibit blurred decision boundaries. Consequently, prior work~\citep{huang2025adaspec,li2025adaserve,hong2025inference} compels decisions at noisy intermediate layers, introducing cumulative misjudgment risks that negate potential gains.

Driven by this insight, we present \textit{ECHO} (\textit{Elastic Speculative Decoding with Sparse Gating for High-Concurrency Scenarios}), a framework that reformulates tree construction as a budget scheduling problem. To align with the requirements of high-concurrency scenarios, \textit{ECHO} operates under a strict fixed verification budget. 
Within this fixed envelope, \textit{ECHO} employs a sparse gating strategy anchored on high-fidelity "sweet spots"--primarily the root and target depth, supplemented by a few discriminative intermediate layers selected dynamically. Specifically, we trigger a "Truncate-to-Widen" operation at these checkpoints to maximize yield under uncertainty, or an "Extend" operation to capitalize on momentum. 
\textbf{This selective elasticity enables \textit{ECHO} to aggressively deepen the candidate tree to reduce global verification steps when confidence is high, while confidence is insufficient, pivoting to width to safeguard per-step efficiency.}

Crucially, we implement \textit{ECHO} atop the industrial-grade \texttt{SGLang} framework. Unlike prior dynamic tree methods that primarily report results on raw \texttt{transformers}—leaving their incompatibility with standard serving kernels unaddressed—we explicitly implemented specialized operators to support irregular batch shapes within \texttt{SGLang}. This system-level adaptation allows the tree structure to fluidly morph to match the request's entropy, effectively bridging the gap between theoretical dynamic algorithms and production deployment.

Our main contributions are as follows:
\begin{itemize}[leftmargin=*, itemsep=0pt, topsep=4pt]
    \item We formulate a serving-centric methodology for SD in modern LLM serving, highlighting verification compute as the dominant scarce resource and identifying two primary failure modes undersystem constraints: verification waste and misjudgment accumulation in dynamic control.

    \item We propose \textit{ECHO}, a training-free framework that employs sparse gating at high-fidelity "sweet spots" (root, target depth, and adaptive intermediate layers). By scheduling width vs. depth under a fixed budget, \textit{ECHO} minimizes misjudgment accumulation. 

    \item We validate \textit{ECHO} on various models, especially the large-scale Qwen3-235B. The results demonstrate that \textit{ECHO} consistently outperforms state-of-the-art (SOTA) methods, particularly in speculation-hard regimes where prior art degrades.

    \item We provide the implementation of dynamic-tree SD for high-concurrency scenarios and integrated it into \texttt{SGLang} for the first time, bridging the gap between academic research and production deployment.
\end{itemize}

\section{Problem Formulation}
\label{sec:problem}
\paragraph{Standard SD.}
SD pairs a lightweight draft model with a target model.
At step $t$ with prefix $x_{1:t}$, the draft model proposes $K$ draft tokens, and the target verifies them in one parallel forward pass, accepting a prefix of length $L\in\{0,\dots,K\}$ (e.g., via rejection sampling~\citep{leviathan2023fast}).

A common \texttt{Speedup} proxy is 
\begin{equation}
\textit{Speedup} =
\frac{(\mathbb{E}[L]+1)\, T_{\text{ar}}}{T_{\text{draft}}(K) + T_{\text{verify}}(K)},
\label{eq:standard_speedup}
\end{equation}
where $\mathbb{E}[L]$ represents the mean accepted tokens (MAT), $T_{\text{ar}}$ is the cost of one autoregressive step, and $T_{\text{draft}}/T_{\text{verify}}$ correspond to the drafting and verification latency.
Prior work often assumes $T_{\text{verify}} \approx T_{\text{ar}}$ for moderate $K$, motivating the use of large draft token trees to maximize $\mathbb{E}[L]$ regardless of the verification overhead~\citep{li2025eagle}.

\paragraph{SD in High-Concurrency Scenarios.}
In production serving, a batch contains $B$ concurrent requests.
Let request $i$ propose $K_i$ tokens. The total number of tokens to verify in one step is $K_{\text{total}}=\sum_{i=1}^{B}K_i$.
Recent studies~\citep{Liu2025Illusion,sadhukhan2024magicdec} show that verification becomes \textbf{compute-bound} when $K_{\text{total}}$ is large. This often happens with large models and long contexts. In this case, the latency grows linearly with the number of tokens:
\begin{equation}
T_{\text{ver}}(K_{\text{total}})\approx T_{\text{ar}}(1 +\gamma \cdot \left[ K_{\text{total}} - K_{\text{max}} \right]^+),
\label{eq:ver_linear}
\end{equation}
where $K_{\max}$ is the computing limit of the hardware.
This means the ``free-lunch'' assumption fails: once the total load exceeds $K_{\text{max}}$, any increase in $K_i$ linearly penalizes the verification latency for all concurrent requests.

\paragraph{Budget-Constrained Objective.}
In compute-bound regimes, SD is best viewed as a verification-budget allocation problem.
Under a strict budget, the objective is no longer solely to maximize $\mathbb{E}[L]$ via aggressive tree expansion (aiming to reduce global verification steps), but to maximize per-step verification efficiency.
We quantify this by draft \texttt{Yield}:
\begin{equation}
\texttt{Yield} = \frac{\mathbb{E}[L]}{1 + \left[ K_{\text{total}} - K_{\text{max}} \right]^+},
\label{eq:yield_def}
\end{equation}
i.e., the fraction of verified tokens that contribute to the accepted prefix.
Rigid static trees may achieve high $\mathbb{E}[L]$ but incur large verification waste via low-utility branches, while dense dynamic control can introduce decision errors and overhead that accumulate under concurrency.
This motivates a serving-oriented design that improves \texttt{Yield} under a fixed verification budget with minimal control.

\begin{figure*}[t]
\begin{center}
\centerline{\includegraphics[width=1\columnwidth]{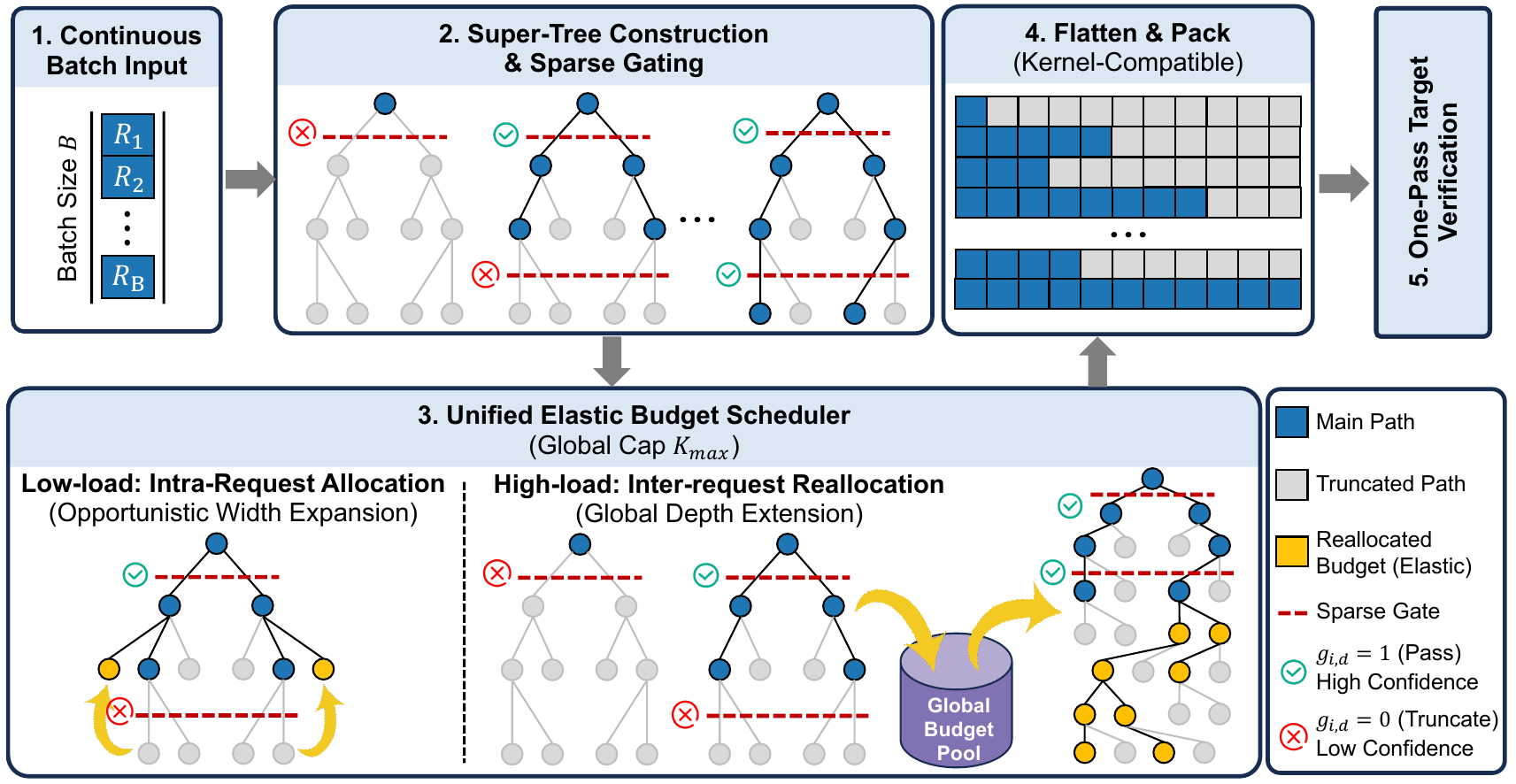}} 
\caption{\textbf{Overview of the \textit{ECHO} Framework.}
\textbf{(1) Super-Tree Construction:} Draft trees are evaluated (truncated or extended) only at sparse gates.
\textbf{(2) Unified Elastic Budget Scheduler:} Under a global verification cap ($K_{\max}$), the scheduler dynamically adapts resource allocation.
In Low-Load scenarios, the budget saved by truncation is reused locally to \textbf{widen} the current tree.
In High-Load scenarios, budget saved from truncated low-confidence requests is reallocated to extend the \textbf{depth} of high-confidence requests.
\textbf{(3) Flatten \& Pack:} Finally, the ragged batch formed by requests with varying token counts is packed into a dense, kernel-compatible layout for efficient verification.
}

\label{fig:overview}
\end{center}
\vspace*{-0.3in}
\end{figure*}

\section{Method}
\label{sec:method}

\subsection{Overview: The \textit{ECHO} Framework}
\label{subsec:method_overview}
We propose \textit{ECHO}, a budget-aware speculative decoding framework that unifies tree construction (depth vs.\ width) and batch-level scheduling (request-to-request allocation) under a single serving-centric objective: maximize useful progress under a fixed verification compute budget.
In modern high-concurrency scenarios, verification easily becomes compute-bound, leaving a narrow margin for speedup. 
Consequently, the system is strictly constrained by the total number of tokens verified in each step. \textit{ECHO} therefore models each verification iteration as a global token-budget allocation problem.
Figure~\ref{fig:overview} illustrates the unified pipeline, showing how the unified scheduler coordinates dynamic tree construction and elastic resource reallocation before packing the batch for efficient execution.

\paragraph{Super-Tree View.}
Consider a batch of $B$ concurrent requests.
At a given SD iteration, each request $i$ constructs a candidate token tree $\mathcal{G}_i$.
Let $\mathcal{V}_i$ denote the set of all candidate nodes in $\mathcal{G}_i$.
The target model verifies the flattened union $\bigcup_{i=1}^B \mathcal{V}_i$ in one parallel forward pass~\cite{li2024eagle}.

We impose a batch-level verification cap:
\begin{equation}
\sum_{i=1}^{B} K_i \;\le\; K_{\max},
\qquad K_i \triangleq |\mathcal{G}_i|,
\label{eq:global_budget}
\end{equation}
where $K_i$ is the number of candidate tokens submitted for request $i$, and $K_{\max}$ is set near the hardware saturation threshold (i.e., the compute-bound limit).
Under this view, the batch behaves as a single super-tree sharing one global budget $K_{\max}$: allocating additional candidates to one request necessarily reduces the candidates available to others.

\paragraph{Core Components.}
\textit{ECHO} couples two key components that jointly determine \emph{where} and \emph{how} to spend the budget:
\begin{itemize}[leftmargin=*, itemsep=0pt, topsep=4pt]
    \item \textbf{Sparse Confidence Gating:} A lightweight reliability signal evaluated only at a sparse set of discriminative ``sweet spots'' (e.g., root, target depth, and select intermediate depths). This avoids the overhead and misjudgment accumulation of per-depth/node control.
    \item \textbf{Elastic Budget Scheduling:} A unified scheduler that reallocates the shared budget across two axes: (1) \textbf{within a request} (adjusting depth vs.\ width), and (2) \textbf{across requests} (shifting budget from one request to another), all constrained by Eq.~\ref{eq:global_budget}.
\end{itemize}

\subsection{Sparse Gating at Sweet Spots}
\label{subsec:sparse_gating}
Prior dynamic tree methods~\cite{liu2026talon, brown2024dynamic} often perform dense, fine-grained control (e.g., per-depth/node) to adjust topology.
However, our analysis indicates that the discriminative power of confidence scores is depth-dependent (Figure~\ref{fig:Layer-wise confidence density}): only a few depths act as reliable ``sweet spots'', while most depths exhibit blurred decision boundaries.
\textit{ECHO} exploits this structure by restricting gating to a sparse set of checkpoint depths.

\vspace*{-0.05in}
\paragraph{Path Scoring.}
Consider the draft tree for request $i$.
For a node $j \in \mathcal{V}_{i,d}$ at depth $d$ corresponding to token $x_{i,d,j}$, we define its path score (cumulative log-score) recursively:
\begin{equation}
S_{i,d,j}
=
S_{i,d-1,\mathrm{pa}(j)} + \log q\!\left(x_{i,d,j}\mid h_{i,d-1,\mathrm{pa}(j)}\right),
\label{eq:path_score_method}
\end{equation}
where $\mathrm{pa}(j)$ denotes the parent node index, $h_{i,d-1,\mathrm{pa}(j)}$ is the hidden state used to generate the current token.
\vspace*{-0.05in}
\paragraph{Layer Confidence.}
We summarize the reliability of the entire candidate set $\mathcal{V}_{i,d}$ at depth $d$ for request $i$ using the maximum-likelihood path probability:
\begin{equation}
c_{i,d}
=
\exp\!\left( \max_{j\in \mathcal{V}_{i,d}} S_{i,d,j}\right).
\label{eq:layer_conf_method}
\end{equation}
Intuitively, $c_{i,d}$ measures whether there exists at least one highly probable path up to depth $d$.
\vspace*{-0.05in}
\paragraph{Offline Calibration.}
Similarly to the calibration phase in Post-Training Quantization~\cite{xiao2023smoothquant,lin2024awq}, \textit{ECHO} determines the gating parameters during the warm-up period with almost no additional overhead. 
We quantify the separability of confidence scores $c_{i,d}$ for accepted versus rejected tokens using the AUC metric~\cite{hanley1982meaning}.
We identify ``sweet spots'' $\mathcal{D}_{\mathrm{sig}}$ as depths exhibiting high discriminative power (i.e., $\text{AUC}_d > \delta$) and calibrate the thresholds $\tau_d$ at these checkpoints to maximize the separation between the two distributions.
\vspace*{-0.05in}
\paragraph{Gating Signal.}
During inference, \textit{ECHO} restricts gating decisions strictly to the pre-calibrated checkpoints $d \in \mathcal{D}_{\mathrm{sig}}$.
At these depths, the system emits a binary control signal based on the learned thresholds:
\begin{equation}
g_{i,d} = \mathbb{I}\left[c_{i,d} > \tau_d\right].
\label{eq:gating_signal}
\end{equation}
Here, $g_{i,d}=1$ signals high confidence (safe to extend depth), while $g_{i,d}=0$ signals low confidence (high probability of rejection), triggering the scheduler to truncate and reallocate budget.

\subsection{Unified Elastic Budget Scheduling}
\label{subsec:budget_scheduling}
The core of \textit{ECHO} treats the verification cap $K_{\max}$ as a shared global pool.
To ensure fair resource competition, the scheduler operates depth-by-depth across the batch, applying a unified priority rule to resolve the key allocation trade-off:
whether to reinvest locally by widening the current draft tree at the truncation depth, or to reallocate globally to support other high-potential requests by allocating a larger token budget.
\vspace*{-0.05in}
\paragraph{Priority-Based Allocation.}
At decision step, we allocate the remaining budget based on a strict priority hierarchy:
\begin{itemize}[leftmargin=*, itemsep=0pt, topsep=4pt]
    \item \textbf{Priority 1: Global Depth Extension.}
    This is the primary objective, \textbf{aimed at aggressively reducing the total verification steps}. As long as any request in the batch maintains high confidence ($g_{i,d}=1$), the global budget is prioritized for its depth extension. If a request is truncated ($g_{i,d}=0$), it yields the budget to other active requests ($j \neq i$) that are still deepening.
    
    \item \textbf{Priority 2: Opportunistic Width Expansion.}
    This is triggered only when no active requests can further extend depth (e.g., all requests are truncated). Only then is the surplus budget reallocated to widen the candidate set of the truncated requests to improve coverage.
\end{itemize}
In a resource-contended batch, the budget is reserved for the depth expansion of confident requests before any width expansion is permitted.

\vspace*{-0.1in}
\paragraph{Elastic Scheduling.}
The unified priority rule adapts naturally to different load conditions (Alg.~\ref{alg:echo_scheduler}):
\begin{itemize}[leftmargin=*, itemsep=0pt, topsep=4pt]
    \item \textbf{Case 1 (Low-load: Intra-request Allocation).}
    When the budget is ample (e.g., $B=1$ or low contention), Priority 1 cannot fully absorb the available capacity.
    The scheduler naturally falls back to Priority 2, reinvesting the surplus locally to widen the draft tree at the truncation depth.
    This ``Truncate-and-Widen'' behavior enhances robustness when depth extension is risky.

    \item \textbf{Case 2 (High-load: Inter-request Reallocation).}
    When the batch is saturated (high contention), Priority 1 dominates.
    A truncated request ($g_{i,d}=0$) yields its potential budget to the global pool, where it is immediately claimed for the depth extension of other high-confidence requests.
    This mechanism effectively reallocates compute from low-confidence to high-potential requests to maximize aggregate throughput.
\end{itemize}

\begin{algorithm}[t]
\caption{\textit{ECHO}: Unified Elastic Budget Scheduling}
\label{alg:echo_scheduler}
\begin{algorithmic}[1]
\STATE \textbf{Input:} Batch $B$, Budget $K_{\max}$, Gating params $(\mathcal{D}_{\mathrm{sig}}, \tau)$, Max Width $W_{\max}$, Default Width $W_\text{topk}$
\STATE \textbf{Output:} Draft Trees $\{\mathcal{G}_i\}_{i=1}^B$
\STATE \textbf{Init:} $budget \leftarrow K_{\max}, \; d \leftarrow 0, \; active\_set \leftarrow \{1..B\}, \; trunc\_set \leftarrow \emptyset$

\STATE \COMMENT{\textbf{Phase 1: Global Depth Extension}}
\WHILE{$budget > 0$ \textbf{and} $active\_set \neq \emptyset$}
    \STATE $d \leftarrow d + 1$
    \FOR{$i \in active\_set$ \textbf{if} $budget > 0$}
        \STATE $is\_pass \leftarrow (c_{i,d} > \tau_d)$ \textbf{if} $d \in \mathcal{D}_{\mathrm{sig}}$ \textbf{else} True
        \IF{$is\_pass$}
            \STATE $\mathcal{G}_i.append(topk\_tokens) $
            \STATE $budget \leftarrow budget - W_\text{topk}$
        \ELSE
            \STATE $active\_set.remove(i); \quad trunc\_set.add(i)$
        \ENDIF
    \ENDFOR
\ENDWHILE

\STATE \COMMENT{\textbf{Phase 2: Opportunistic Width Expansion}}
\FOR{$i \in trunc\_set$ \textbf{if} $budget \ge k$}
    \STATE $w \leftarrow \min(budget, W_{\max})$ 
    \STATE $\mathcal{G}_i.widen(k); \quad budget \leftarrow budget - W_{\max}$
\ENDFOR
\end{algorithmic}
\end{algorithm}

\vspace*{-0.05in}

\section{Theoretical Guarantees}
\label{sec:theory_summary}
We summarize two key guarantees that motivate \textit{ECHO}'s design; full derivations are deferred to Appendix~\ref{app:theory}.
\vspace*{-0.1in}
\paragraph{Width improves coverage at a truncated depth.}
When depth extension is halted at depth $d$ (low confidence), \textit{ECHO} uses remaining budget to widen the frontier.
The next theorem states that widening strictly increases the probability of covering the target token at that depth.

\begin{theorem}[Coverage Gain via Width]
\label{thm:width_coverage}
Let $\mathcal{S}_k=\{x^{(1)},\dots,x^{(k)}\}$ be the top-$k$ candidate tokens at depth $d$ ranked by the target next-token distribution $p_t(x\mid x_{<d})$.
Let $x^*$ be the ground-truth token sampled from $p_t$, and define the coverage probability as $\mathbb{P}(x^* \in \mathcal{S}) \triangleq \sum_{x \in \mathcal{S}} p_t(x)$.
If we expand from $k$ to $k'>k$, then the coverage probability increases by the added probability mass:
\begin{equation}
\mathbb{P}(x^*\in \mathcal{S}_{k'})-\mathbb{P}(x^*\in \mathcal{S}_{k})
=\sum_{i=k+1}^{k'} p_t(x^{(i)}\mid x_{<d}) \;>\; 0,
\label{eq:coverage_gain}
\end{equation}
whenever $\sum_{i=k+1}^{k'} p_t(x^{(i)}\mid x_{<d})>0$.
\end{theorem}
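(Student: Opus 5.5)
The argument is essentially a direct computation from the definition of coverage probability, so I plan to derive the identity first and then read off the strict inequality from the stated hypothesis. Throughout I fix the prefix $x_{<d}$ and write $p_t(x)$ for $p_t(x\mid x_{<d})$. I treat the ranked list $x^{(1)},x^{(2)},\dots$ as a sequence of \emph{distinct} tokens ordered by nonincreasing $p_t$, with ties broken by any fixed rule. This guarantees that $\mathcal{S}_k$ and $\mathcal{S}_{k'}$ are genuine sets with $|\mathcal{S}_k|=k$ and $|\mathcal{S}_{k'}|=k'$. I also assume $k'$ does not exceed the vocabulary size.

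\textbf{Step 1 (nesting and disjoint decomposition).} Since both sets are prefixes of the same ranked list, $\mathcal{S}_k\subseteq\mathcal{S}_{k'}$. The difference is
\[
\mathcal{S}_{k'}\setminus\mathcal{S}_k=\{x^{(k+1)},\dots,x^{(k')}\}.
\]
This gives the disjoint union $\mathcal{S}_{k'}=\mathcal{S}_k\,\dot\cup\,\{x^{(k+1)},\dots,x^{(k')}\}$. Distinctness of the ranked tokens is what makes this union disjoint.

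\textbf{Step 2 (additivity).} Apply the definition $\mathbb{P}(x^*\in\mathcal{S})=\sum_{x\in\mathcal{S}}p_t(x)$. This is just the probability of the event $\{x^*\in\mathcal{S}\}$ when $x^*\sim p_t$, so it is finitely additive over disjoint sets. Splitting the sum over $\mathcal{S}_{k'}$ along the decomposition of Step 1 and subtracting the sum over $\mathcal{S}_k$ gives
\[
\mathbb{P}(x^*\in\mathcal{S}_{k'})-\mathbb{P}(x^*\in\mathcal{S}_k)=\sum_{i=k+1}^{k'}p_t(x^{(i)}\mid x_{<d}),
\]
which is the identity in \eqref{eq:coverage_gain}.

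\textbf{Step 3 (sign).} Every term of the sum is nonnegative, so the gain is always $\ge 0$; widening never hurts coverage. Under the theorem's hypothesis that the added mass is positive, the gain is strictly positive.

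\textbf{Remarks on obstacles.} The only potential subtlety is definitional rather than technical. One must make sure the ranking yields distinct tokens, for which tie-breaking suffices, and that ``coverage'' means the mass under the target distribution $p_t$. If the candidates were instead chosen by the draft distribution $q$, the same argument still works verbatim for any nested candidate sets: the ranking by $p_t$ is never used in the proof, only nesting. It is also worth noting that the strictness condition is exactly the hypothesis, so nothing further has to be proved about positivity of $p_t$.
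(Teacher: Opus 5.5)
Your proof is correct and follows the same route as the paper. The identity is just additivity of $\sum_{x\in\mathcal{S}} p_t(x)$ over the nested decomposition $\mathcal{S}_{k'}=\mathcal{S}_k\cup\{x^{(k+1)},\dots,x^{(k')}\}$, and your treatment is more careful than the paper's sketch: it tries to justify positivity of the added mass from nonzero entropy at a low-confidence depth, which only guarantees mass outside the top-$1$ token, whereas you correctly note that strictness is exactly the stated hypothesis.
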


\paragraph{Compute-bound serving favors reallocating budget to higher marginal utility.}
Under saturated compute-bound serving, \textit{ECHO} enforces a per-iteration token cap (Eq.~\ref{eq:global_budget}) so that iteration time is dominated by verifying a fixed number of tokens.
Thus, improving end-to-end throughput reduces to increasing the batch-level expected accepted tokens per iteration minimizing total verification steps, not necessarily each request's mean acceptance length.
The next theorem formalizes a sufficient condition: moving budget from low marginal gain to high marginal gain strictly improves the batch objective.

\begin{theorem}[Marginal Utility Exchange]
\label{thm:exchange}
Consider a batch iteration with fixed verification budget $\sum_{i=1}^B K_i = K_{\max}$.
Let $f_i(k)\triangleq \mathbb{E}[L_i \mid K_i=k]$ be request $i$'s expected accepted tokens given $k$ verified candidates, and define the marginal gain $\Delta_i(k)\triangleq f_i(k)-f_i(k-1)$.
For any two requests $i,j$, if
\begin{equation}
\Delta_j(K_j+1) \;>\; \Delta_i(K_i),
\label{eq:exchange_cond}
\end{equation}
then reallocating one token from $i$ to $j$ (i.e., $K_i\!\leftarrow\!K_i-1,\,K_j\!\leftarrow\!K_j+1$) strictly increases the batch-level expected accepted tokens $\sum_{m=1}^B \mathbb{E}[L_m]$, and therefore improves end-to-end throughput in compute-bound serving.
\end{theorem}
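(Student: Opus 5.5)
The plan is a direct telescoping computation on the batch objective. Write $F(K_1,\dots,K_B)\triangleq\sum_{m=1}^B f_m(K_m)$. I will treat $\mathbb{E}[L_m]$ as depending only on $K_m$, so that $f_m$ is well defined and the objective is additively separable. This is the implicit modeling assumption of Theorem~\ref{thm:exchange}, and I will state it explicitly: each request's candidate set is built from its own draft distribution, and verification acceptance depends only on its own prefix.

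Consider the reallocation $K_i\leftarrow K_i-1$, $K_j\leftarrow K_j+1$ with $i\neq j$. It leaves every term $m\notin\{i,j\}$ unchanged. It also preserves the constraint $\sum_m K_m=K_{\max}$ of Eq.~\ref{eq:global_budget}, provided $K_i\ge 1$; otherwise there is no token to move and $\Delta_i(K_i)$ is undefined. The change in the objective is therefore
\begin{equation*}
F' - F = \bigl[f_j(K_j+1)-f_j(K_j)\bigr] - \bigl[f_i(K_i)-f_i(K_i-1)\bigr] = \Delta_j(K_j+1)-\Delta_i(K_i),
\end{equation*}
using the definition $\Delta_m(k)=f_m(k)-f_m(k-1)$ with $k=K_j+1$ and $k=K_i$ respectively. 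By hypothesis~\eqref{eq:exchange_cond} this quantity is strictly positive, so the batch-level expected accepted tokens strictly increase.

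The throughput claim then follows from the compute-bound model. Under the cap, every iteration verifies exactly $K_{\max}$ tokens. By Eq.~\ref{eq:ver_linear}, the iteration latency $T_{\text{ver}}(K_{\max})$ is therefore identical before and after the exchange, and the drafting cost is likewise unchanged, to first order, because the total number of drafted tokens is the same. Throughput per iteration is $\bigl(\sum_m(\mathbb{E}[L_m]+1)\bigr)/T_{\text{iter}}$. The denominator is fixed, the numerator strictly increases, and so throughput strictly improves. Equivalently, the expected number of verification steps needed to emit a fixed number of tokens decreases.

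The main obstacle is not the algebra but justifying the modeling step. I have to argue that iteration time really is a function of $\sum_m K_m$ alone, which holds in the saturated regime where $[K_{\text{total}}-K_{\max}]^+$ governs cost, and that $\mathbb{E}[L_m]$ depends on $K_m$ alone. I will state both as the standing assumptions under which Theorem~\ref{thm:exchange} is claimed. I will also remark that when the $f_m$ are concave, the same exchange argument shows that a greedy marginal-gain allocation is optimal. This connects the result to ECHO's priority rule, but it is not needed for the statement itself.
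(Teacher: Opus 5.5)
Your proposal is correct and matches the paper's proof: both compute $\Delta\mathcal{J}=[f_j(K_j+1)-f_j(K_j)]-[f_i(K_i)-f_i(K_i-1)]=\Delta_j(K_j+1)-\Delta_i(K_i)>0$, and both get the throughput claim from constant per-iteration latency under the fixed cap, which the paper packages as Proposition~\ref{prop:cb_reduction_app}. Your explicit assumptions (separability of $\mathbb{E}[L_m]$ in $K_m$, and $K_i\ge 1$) are implicit in the paper; for the throughput step, simply adopt the paper's assumption that iteration time is dominated by verification rather than claiming drafting cost is unchanged, since depth-sequential drafting cost can grow when the moved token deepens request $j$.
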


\section{Experiments}
\label{sec:experiments}

\begin{table*}[t]
\centering
\caption{\textbf{Main results on five benchmarks under the low-load setting (BS = 1)}. Performance comparison between \textit{ECHO} and existing baselines across diverse model configurations. Bold numbers denote the best speedup.}
\resizebox{\linewidth}{!}{%
\begin{tabular}{c l cc cc cc cc cc c}
\toprule
\multirow{2}{*}{Models} & \multirow{2}{*}{Methods} & \multicolumn{2}{c}{HumanEval} & \multicolumn{2}{c}{GSM8K} & \multicolumn{2}{c}{CNN/DM} & \multicolumn{2}{c}{Alpaca} & \multicolumn{2}{c}{MT-Bench}&\multirow{2}{*}{Avg.}\\
\cmidrule(lr){3-4} \cmidrule(lr){5-6} \cmidrule(lr){7-8} \cmidrule(lr){9-10} \cmidrule(lr){11-12} & &MAT & Speedup & MAT & Speedup & MAT & Speedup & MAT & Speedup & MAT & Speedup\\
\midrule
\multirow{9}{*}{\makecell[c]{Vicuna-13B}}
& Lookahead & 1.73 & 1.69$\times$ & 1.88 & 1.79$\times$ & 1.48 & 1.44$\times$ & 1.49 & 1.44$\times$ & 1.67 & 1.63$\times$ & 1.60$\times$\\
& Sps       & 2.55 & 1.81$\times$ & 1.99 & 1.75$\times$ & 2.31 & 1.71$\times$ & 2.01 & 1.74$\times$ & 2.25 & 1.81$\times$ & 1.76$\times$\\
& Medusa & 2.78 & 2.25$\times$ & 2.63 & 2.12$\times$ & 2.09 & 1.65$\times$ & 2.44 & 1.96$\times$ & 2.58 & 2.08$\times$ & 2.01$\times$\\
& Hydra & 3.87 & 2.75$\times$ & 3.66 & 2.60$\times$ & 2.82 & 1.95$\times$ & 3.51 & 2.48$\times$ & 3.64 & 2.53$\times$ & 2.46$\times$\\
& OPT-Tree & 8.21 & 3.85$\times$ & 6.77 & 3.25$\times$ & 6.91 & 2.75$\times$ & 6.56 & 3.20$\times$ & 6.95 & 3.30$\times$ & 3.27$\times$\\
& DDD  & 8.95 & 4.95$\times$ & 6.21 & 3.92$\times$ & 6.02 & 3.45$\times$ & 5.98 & 3.59$\times$ & 6.05 & 3.95$\times$ & 3.97$\times$\\
& EAGLE3 & 8.49 & 4.81$\times$ & 6.82 & 3.85$\times$ & 6.41 & 3.38$\times$ & 6.49 & 3.65$\times$ & 6.83 & 3.89$\times$ & 3.92$\times$\\

&\cellcolor{gray!18}\textbf{\textit{ECHO}} & \cellcolor{gray!18}9.35 & \cellcolor{gray!18}\textbf{5.25}$\times$ & \cellcolor{gray!18}6.53 & \cellcolor{gray!18}\textbf{4.08}$\times$ & \cellcolor{gray!18}6.34 & \cellcolor{gray!18}\textbf{3.53}$\times$ & \cellcolor{gray!18}6.24 & \cellcolor{gray!18}\textbf{3.74}$\times$ & \cellcolor{gray!18}6.33 & \cellcolor{gray!18}\textbf{4.12}$\times$ & \cellcolor{gray!18}\textbf{4.14}$\times$ \\

\midrule
\multirow{4}{*}{\makecell[c]{LLaMA3.1-8B}}
& DDD & 6.95 & 4.18$\times$ & 6.02 & 4.07$\times$ & 5.08 & 3.10$\times$ & 6.53 & 4.08$\times$ & 5.98 & 3.70$\times$ & 3.83$\times$\\
& EAGLE3 & 7.18 & 4.02$\times$ & 6.50 & 3.94$\times$ & 5.46 & 3.02$\times$ & 7.06 & 4.01$\times$ & 6.43 & 3.63$\times$ & 3.72$\times$ \\
& \cellcolor{gray!18}\textbf{\textit{ECHO}} & \cellcolor{gray!18}7.22 & \cellcolor{gray!18}\textbf{4.30}$\times$ & \cellcolor{gray!18}6.28 & \cellcolor{gray!18}\textbf{4.10}$\times$ & \cellcolor{gray!18}5.24 & \cellcolor{gray!18}\textbf{3.26}$\times$ & \cellcolor{gray!18}6.81 & \cellcolor{gray!18}\textbf{4.19}$\times$ & \cellcolor{gray!18}6.23 & \cellcolor{gray!18}\textbf{3.86}$\times$ & \cellcolor{gray!18}\textbf{3.94}$\times$ \\

\midrule
\multirow{4}{*}{\makecell[c]{LLaMA3.3-70B}}
& DDD & 6.82 & 5.13$\times$ & 6.15 & 4.63$\times$ & 4.92 & 3.52$\times$ & 6.68 & 4.84$\times$ & 5.70 & 4.02$\times$ & 4.43$\times$\\
& EAGLE3 & 7.12 & 4.98$\times$ & 6.53 & 4.72$\times$ & 5.19 & 3.60$\times$ & 6.95 & 4.75$\times$ & 5.92 & 4.09$\times$ & 4.43$\times$\\

&\cellcolor{gray!18}\textbf{\textit{ECHO}} & \cellcolor{gray!18}7.07 & \cellcolor{gray!18}\textbf{5.35}$\times$ & \cellcolor{gray!18}6.41 & \cellcolor{gray!18}\textbf{5.08}$\times$ & \cellcolor{gray!18}5.10 & \cellcolor{gray!18}\textbf{3.79}$\times$ & \cellcolor{gray!18}6.84 & \cellcolor{gray!18}\textbf{4.94}$\times$ & \cellcolor{gray!18}5.94 & \cellcolor{gray!18}\textbf{4.32}$\times$ & \cellcolor{gray!18}\textbf{4.70}$\times$ \\

\midrule
\multirow{4}{*}{\makecell[c]{Qwen3-8B}}
& DDD & 3.65 & 2.54$\times$ & 3.72 & 2.32$\times$ & 3.08 & 2.13$\times$ & 3.22 & 2.06$\times$ & 3.48 & 2.35$\times$ & 2.28$\times$\\
& EAGLE3 & 3.91 & 2.37$\times$ & 3.94 & 2.35$\times$ & 3.28 & 1.98$\times$ & 3.46 & 2.09$\times$ & 3.71 & 2.20$\times$ & 2.20$\times$\\

&\cellcolor{gray!18}\textbf{\textit{ECHO}} & \cellcolor{gray!18}3.82 & \cellcolor{gray!18}\textbf{2.74}$\times$ & \cellcolor{gray!18}3.88 & \cellcolor{gray!18}\textbf{2.68}$\times$ & \cellcolor{gray!18}3.20 & \cellcolor{gray!18}\textbf{2.37}$\times$ & \cellcolor{gray!18}3.36 & \cellcolor{gray!18}\textbf{2.51}$\times$ & \cellcolor{gray!18}3.63 & \cellcolor{gray!18}\textbf{2.57}$\times$ & \cellcolor{gray!18}\textbf{2.57}$\times$ \\

\midrule
\multirow{4}{*}{\makecell[c]{Qwen3-32B}}
& DDD & 2.82 & 2.18$\times$ & 3.15 & 2.38$\times$ & 2.38 & 1.71$\times$ & 2.68 & 2.11$\times$ & 2.88 & 1.97$\times$ & 2.07$\times$\\
& EAGLE3 & 2.99 & 2.02$\times$ & 3.32 & 2.41$\times$ & 2.55 & 1.67$\times$ & 2.83 & 1.97$\times$ & 3.04 & 1.93$\times$ & 2.00$\times$\\

&\cellcolor{gray!18}\textbf{\textit{ECHO}} & \cellcolor{gray!18}2.95 & \cellcolor{gray!18}\textbf{2.37}$\times$ & \cellcolor{gray!18}3.29 & \cellcolor{gray!18}\textbf{2.72}$\times$ & \cellcolor{gray!18}2.48 & \cellcolor{gray!18}\textbf{1.93}$\times$ & \cellcolor{gray!18}2.74 & \cellcolor{gray!18}\textbf{2.31}$\times$ & \cellcolor{gray!18}2.95 & \cellcolor{gray!18}\textbf{2.25}$\times$ & \cellcolor{gray!18}\textbf{2.32}$\times$ \\

\midrule
\multirow{3}{*}{\makecell[c]{Qwen3-235B}}
& DDD & 2.20 & 1.88$\times$ & 2.48 & 1.68$\times$ & 2.05 & 1.49$\times$ & 2.18 & 1.78$\times$ & 2.28 & 1.99$\times$ & 1.77$\times$\\
& EAGLE3 & 2.41 & 1.82$\times$ & 2.73 & 1.71$\times$ & 2.02 & 1.35$\times$ & 2.35 & 1.72$\times$ & 2.54 & 1.83$\times$ & 1.69$\times$\\

&\cellcolor{gray!18}\textbf{\textit{ECHO}} & \cellcolor{gray!18}2.32 & \cellcolor{gray!18}\textbf{2.23}$\times$ & \cellcolor{gray!18}2.59 & \cellcolor{gray!18}\textbf{1.92}$\times$ & \cellcolor{gray!18}2.08 & \cellcolor{gray!18}\textbf{1.63}$\times$ & \cellcolor{gray!18}2.24 & \cellcolor{gray!18}\textbf{2.07}$\times$ & \cellcolor{gray!18}2.37 & \cellcolor{gray!18}\textbf{2.23}$\times$ & \cellcolor{gray!18}\textbf{2.02}$\times$ \\

\bottomrule
\end{tabular}
}
\label{tab:main_resultes}
\end{table*}

\subsection{Experimental Setting}
\paragraph{Datasets and Models.}
We evaluated \textit{ECHO} in various LLMs, including Vicuna-13B~\citep{vicuna2023}, LLaMA-3.1-8B, LLaMA-3.3-70B~\citep{grattafiori2024llama3herdmodels}, and the Qwen3 series (8B/32B/235B)~\citep{yang2025qwen3}. 
Following protocols established by EAGLE~\citep{li2024eagle} and Spec-Bench~\citep{spec-bench-xia-2024}, our evaluation spans five comprehensive benchmarks covering code generation, mathematical reasoning, summarization, and chat: HumanEval~\citep{humaneval-chen-2021}, GSM8K~\citep{Cobbe_Kosaraju_Bavarian_Hilton_Nakano_Hesse_Schulman_2021}, CNN/DM~\citep{Nallapati_Zhou_dos}, Alpaca~\citep{taori2023alpaca}, and MT-Bench~\citep{mt-bench-zheng-2023}.
\paragraph{Baselines and Implementation.}
We benchmark \textit{ECHO} against representative methods across four distinct categories of SD: 
(1) Standard SD~\citep{chen2023accelerating}; (2) Retrieval-based~\citep{fu2024break}; (3) Training-based~\citep{cai2024medusa,anknerhydra,li2025eagle}; and (4) Dynamic Trees~\citep{wang2025opt,brown2024dynamic}, which we adapted to support EAGLE-3 for fair comparison.
Experiments are conducted on 8$\times$H100 GPUs with greedy sampling. We use HuggingFace \texttt{transformers} for low-load (BS=1) and \texttt{SGLang} for high-load scenarios.
All baselines are reproduced using their official configurations, with detailed hyperparameters provided in the Appendix~\ref{sec:appendix_eval}.
\paragraph{Metrics}
Since \textit{ECHO} strictly adheres to the speculative sampling acceptance conditions and does not modify the target model's weights, the output distribution remains mathematically identical to the target model. Consequently, we focus exclusively on acceleration metrics rather than generation quality:
\begin{itemize}[leftmargin=*, itemsep=0pt, topsep=2pt]
    \item \textbf{Wall-time Speedup:} The actual end-to-end speedup ratio relative to vanilla autoregressive decoding.
    \item \textbf{Mean Accepted Tokens (MAT):} The average number of tokens accepted per verification cycle. However, we argue that MAT is insufficient for evaluating dynamic drafting methods, as arbitrarily increasing the draft depth inevitably inflates the MAT without necessarily reflecting computational efficiency~\citep{brown2024dynamic}.
    \item \textbf{Draft Utilization ($u$):} To address the limitations of MAT, we introduce Draft Utilization, defined as the ratio of accepted tokens to the total number of drafted tokens per cycle ($u = \text{MAT} / \text{Depth}$). This metric provides a more precise measure of the drafting strategy's efficiency.
\end{itemize}

\subsection{Main Results}
\label{subsec:main_results_bs1}

\paragraph{Low-Load Case ($\text{BS}=1$).}
\label{subsec:main_results_bs1}
Table~\ref{tab:main_resultes} summarizes the performance of \textit{ECHO} in the low-load regime. \textit{ECHO} establishes a new SOTA speedup range of \textbf{$1.63 \times$}--\textbf{$5.35 \times$} across all benchmarks, demonstrating three distinct advantages:
\noindent\textbf{(1) Scalability to Industrial Models.}
\textit{ECHO} excels on massive architectures where speculation is challenging. On the industrial-grade \textbf{Qwen3-235B}, it attains a \textbf{2.02$\times$} average speedup, surpassing dynamic (DDD, $1.77\times$) and static (EAGLE-3, $1.69\times$) baselines by \textbf{14\%} and \textbf{19\%}, respectively. This validates the robustness of sparse gating against the sharp probability distributions characteristic of large-scale models.
\noindent\textbf{(2) Efficiency via Sparse Gating.}
Unlike prior dynamic methods burdened by the overhead of dense, layer-wise evaluation, \textit{ECHO} employs lightweight sparse gating to mitigate misjudgment accumulation. This efficiency allows \textit{ECHO} to outperform the representative dynamic method (DDD) by \textbf{15.8\%} on Qwen3-32B, effectively redirecting the saved computational budget toward effective token generation.
\noindent\textbf{(3) High Draft Utilization and Robustness.}
As illustrated in Figure~\ref{fig:draft_u}, \textit{ECHO} maximizes Draft Utilization ($u$) by dynamically aligning draft depth with confidence. By proactively truncating uncertain branches, \textit{ECHO} avoids the computational waste observed in static or overly aggressive drafting. Consequently, it achieves not only a higher average $u$ but also a significantly narrower interquartile range compared to baselines. This tighter distribution confirms \textit{ECHO}'s superior robustness, ensuring consistent drafting precision across diverse inputs.

\begin{figure}[t]
\begin{center}
\centerline{\includegraphics[width=0.7\columnwidth]{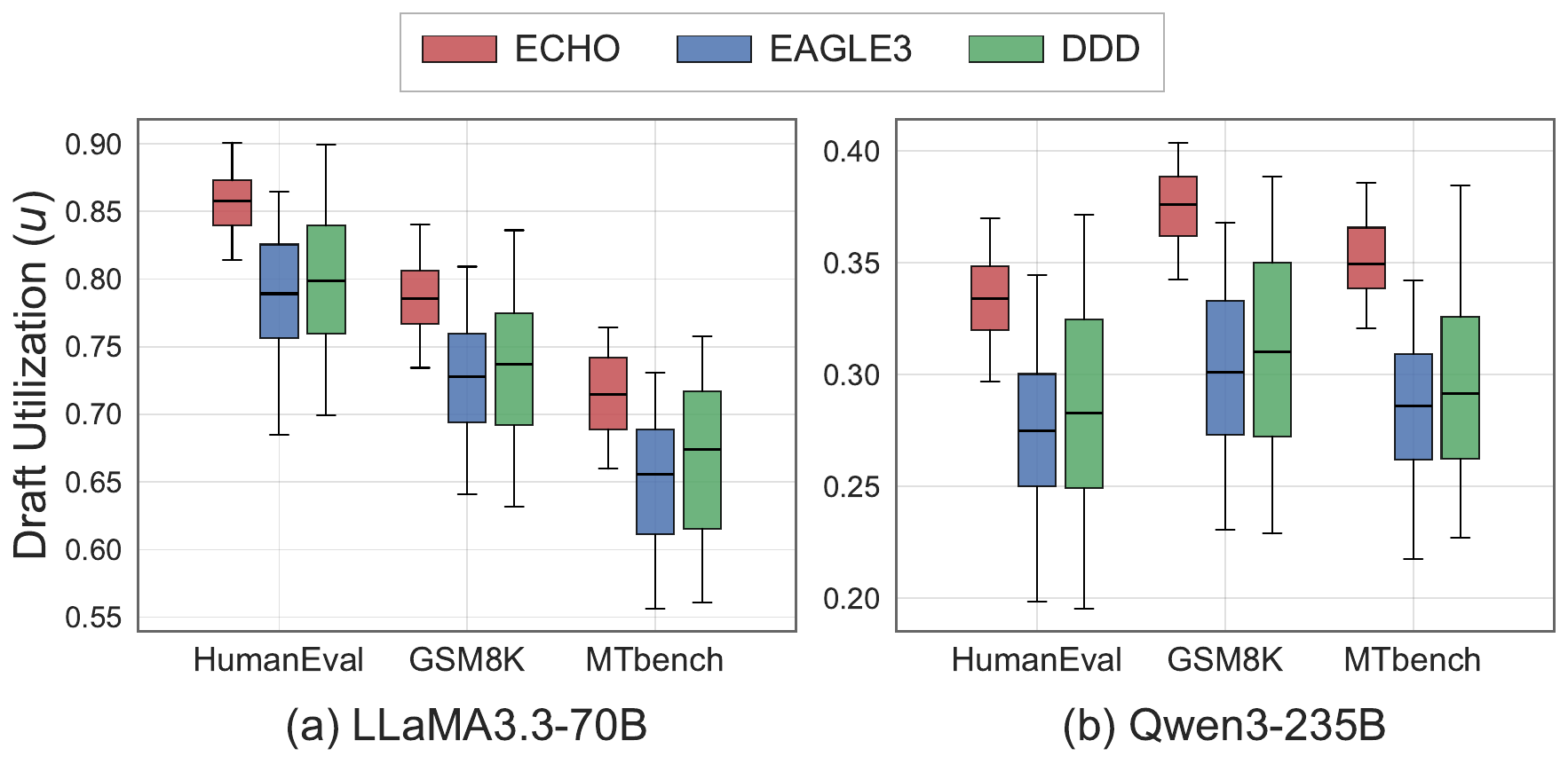}}
\caption{\textbf{Request-level Draft Utilization ($u$) across datasets and models.} We evaluate performance on LLaMA3.3-70B (a) and Qwen3-235B (b) across HumanEval, GSM8K, and MT-Bench. Each box shows the distribution of per-request Draft Utilization, spanning the 25th--75th percentiles, with whiskers covering the 5th--95th percentiles. \textit{ECHO} consistently surpasses EAGLE-3 and DDD in average utilization.}
\label{fig:draft_u}
\end{center} 
\vspace*{-0.4in}
\end{figure}

\paragraph{High-Load Case ($\text{BS}>1$).}
\label{subsec:main_results_bs}
We also evaluate the high-concurrency regime using \texttt{SGLang} to measure system throughput (tokens/s) across batch sizes from 8 to 256. Crucially, the efficacy of \textit{ECHO} is governed by \emph{when} the serving system enters the compute-bound verification regime. Smaller models remain memory-bound until high concurrency, whereas industrial-scale models saturate verification compute much earlier. Once saturation occurs, each iteration is effectively constrained by a fixed verification budget (Sec.~\ref{sec:problem}), and any inefficiency in (i) which tokens enter the verification batch or (ii) how budget is allocated across heterogeneous requests directly translates into lost throughput. \textit{ECHO} is a unified, budget-aware framework; however, different components dominate in different regimes. We analyze these two cases below.

\noindent\textbf{(1) Small-Scale Models: Late Compute-Bound and Precise Truncation.}
For smaller models (e.g., LLaMA-3.1-8B and Qwen3-8B), the system is often not compute-bound until high concurrency.
At low batch sizes, verifying extra speculative tokens hurts less because memory effects still dominate. Once verification becomes the bottleneck, EAGLE-3's fixed deep trees start wasting batch capacity on low-confidence branches.
\textit{ECHO} uses \textbf{sparse gating} to truncate these branches early, so fewer low-utility tokens enter verification. At $\text{BS}=256$, this improves throughput by \textbf{8\%} (10,703 $\rightarrow$ 11,551 tokens/s).

\noindent\textbf{(2) Industrial-Scale Models: Early Compute-Bound and Elastic Reallocation.}
For Qwen3-235B, verification becomes compute-bound at much lower concurrency, so wasted verification work immediately reduces global throughput. Here, the main issue is not only bad branches within one request, but also \emph{how budget is shared across requests}: EAGLE-3 spends too much on difficult requests while leaving little room to extend easy, high-confidence ones.
\textit{ECHO} applies \textbf{elastic budget scheduling} under the global cap (Eq.~\ref{eq:global_budget}), shifting tokens saved from truncated low-confidence requests to deepen high-confidence requests. This yields a \textbf{14.4\%} throughput gain at $\text{BS}=256$ (2,803 $\rightarrow$ 3,207 tokens/s), showing that budget reallocation is key for compute-bound serving.

\begin{figure*}[t]
\begin{center}
\centerline{\includegraphics[width=1\columnwidth]{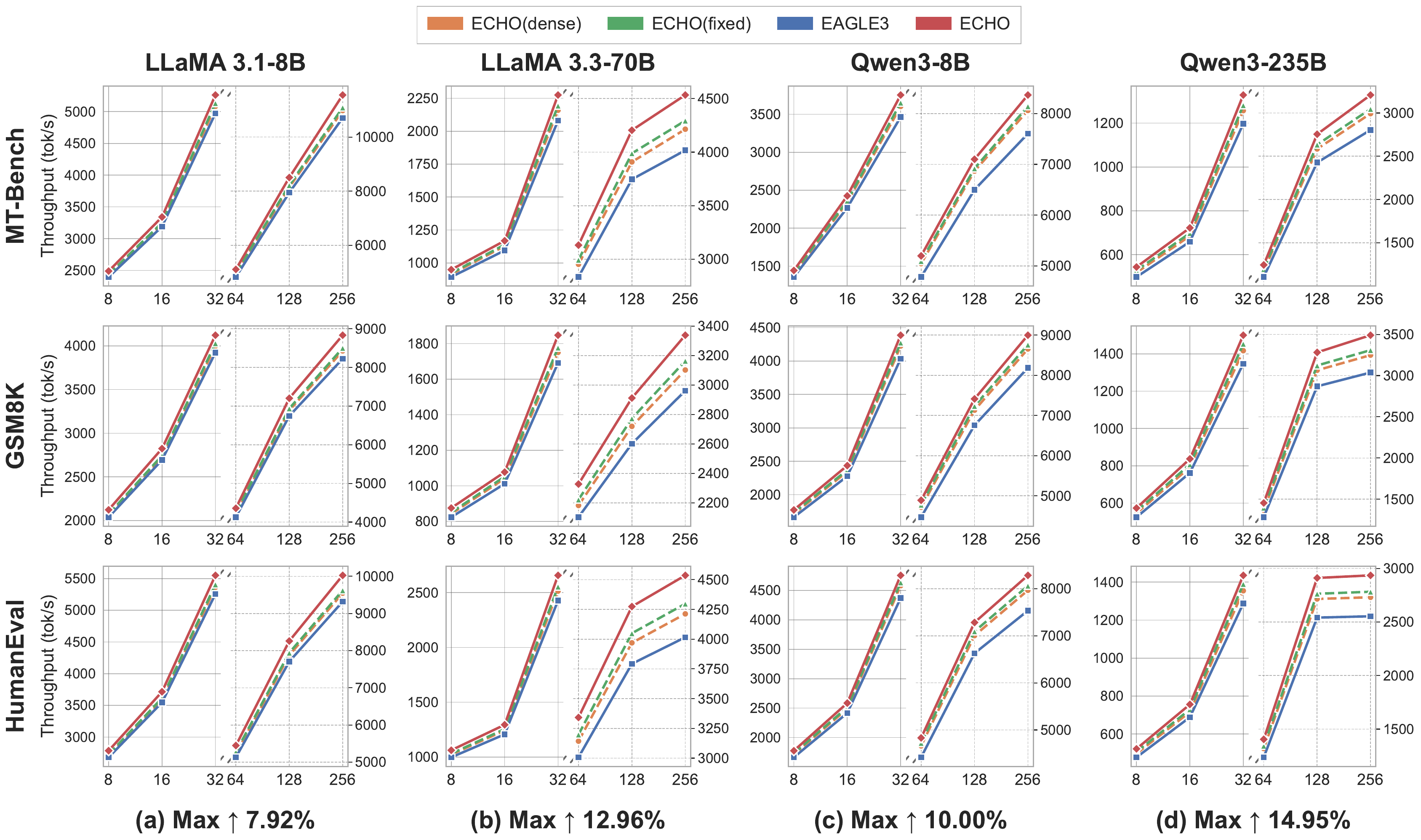}} 
\caption{\textbf{Main results on High-Load Case ($\text{BS} > 1$).} We evaluate \textit{ECHO} against EAGLE3 and two \textit{ECHO} variants on three benchmarks using four model configurations. The maximum improvement percentage below each column is against EAGLE3.}
\label{fig:results_bs>1}
\end{center}
\vspace*{-0.4in}
\end{figure*}

\subsection{Ablation}
To clarify the role of each component in \textit{ECHO}, we study two design questions: \emph{where} to apply gating (sparse vs. dense) and \emph{how} to set gating thresholds (depth-aware vs. fixed).
To understand which design choices matter, we compare \textit{ECHO} with two simplified variants: \textbf{Dense Gating} (making a gating decision for each depth) and \textbf{Fixed Threshold} (using a fixed threshold for all depths). Figure~\ref{fig:results_bs>1} shows that the full \textit{ECHO} consistently performs best.
\paragraph{Sparse Gating vs. Dense Gating.}
Dense gating is a natural baseline: if we check more often, can we save more wasted tokens?
The answer is no. On LLaMA-3.1-8B at $\text{BS}=256$, \textbf{Dense Gating} is about \textbf{5\%} worse than \textit{ECHO} (11,551 $\rightarrow$ 10,978 tokens/s).
This happens for two simple reasons. First, \emph{the checks themselves cost time}; doing them at every step adds up, especially for small models.
Second, confidence scores at many intermediate depths are \emph{not very reliable} (Figure~\ref{fig:Layer-wise confidence density}), so frequent checks can make wrong calls and cut branches that would have been accepted.
By only checking at a few reliable ``sweet spots'', \textit{ECHO} keeps the overhead low while making more accurate decisions.

\paragraph{Depth-Aware Threshold vs. Fixed Threshold.}
The \textbf{Fixed Threshold} baseline uses a single confidence cutoff $\tau$ for all depths.
This is brittle because confidence naturally decreases with depth: deeper tokens often have lower probabilities even when they are correct.
As a result, one threshold cannot work well everywhere: a high $\tau$ prunes too aggressively at deeper layers, while a low $\tau$ lets in too many low-quality tokens near the root.
On Qwen3-235B, \textit{ECHO} improves throughput over \textbf{Fixed Threshold} by \textbf{5.3\%} (3,046 $\rightarrow$ 3,207 tokens/s), showing that depth-aware calibration of $\tau_d$ is important for using the verification budget effectively.

\section{Related Works}
\label{sec:related_work}
\paragraph{Speculative Decoding}
SD leverages parallel verification to accelerate inference~\citep{leviathan2023fast, chen2023accelerating}. Existing approaches fall into three categories: draft model-based methods using separate lightweight drafters~\citep{zhou2023distillspec}; draft model-free methods utilizing auxiliary heads (e.g., Medusa~\citep{cai2024medusa}, EAGLE~\citep{li2025eagle}) or multi-token prediction~\citep{zeng2025glm}; and non-parametric methods relying on retrieval or matching logic~\citep{rest-he-2024}. While these works optimize draft generation, maximizing efficiency further requires optimizing the structure of these candidates.
\paragraph{Dynamic Token Tree}
Tree-based SD explores multiple paths to maximize acceptance~\citep{miao2024specinfer}. Static tree methods employ fixed structures for low overhead but lack flexibility~\citep{li2025eagle}. Conversely, dynamic tree strategies (e.g., TALON~\citep{liu2026talon}, OPT-Tree~\citep{wang2025opt}) adapt topology based on token probability. However, these methods typically rely on dense evaluation logic, introducing significant control overhead and generating irregular (ragged) batch shapes incompatible with standard high-performance serving kernels.
\paragraph{SD in High-Concurrency Scenarios}
Research has recently expanded from single-batch acceleration to system-level scheduling. While server-side~\citep{fu2024break} and client-side~\citep{liu2024pearl} optimizers exist, recent works like TETRIS~\citep{wu2025tetris} and TurboSpec~\citep{liu2025turbospec} have begun integrating SD into serving. Yet, prior art often treats the draft tree as a black box, overlooking the compute-bound constraints of high-concurrency regimes. \textit{ECHO} addresses this by jointly optimizing tree construction and budget scheduling under strict verification limits.

\section{Conclusion}
\label{sec:conclusion}
We study SD under high-concurrency scenarios and show that verification becomes the bottleneck, invalidating ``free-lunch'' assumptions. We propose \textit{ECHO}, a budget-aware framework that uses sparse gating and elastic scheduling to reduce verification waste while avoiding the error accumulation of fine-grained dynamic control. Experiments show consistent throughput gains across model scales, motivating a serving-oriented SD design that jointly optimizes global verification steps and per-step verification efficiency under compute-bound constraints.

\bibliography{iclr2026_conference}

@inproceedings{miao2024specinfer,
  title={Specinfer: Accelerating large language model serving with tree-based speculative inference and verification},
  author={Miao, Xupeng and Oliaro, Gabriele and Zhang, Zhihao and Cheng, Xinhao and Wang, Zeyu and Zhang, Zhengxin and Wong, Rae Ying Yee and Zhu, Alan and Yang, Lijie and Shi, Xiaoxiang and others},
  booktitle={Proceedings of the 29th ACM International Conference on Architectural Support for Programming Languages and Operating Systems, Volume 3},
  pages={932--949},
  year={2024}
}

@article{guo2025deepseek,
  title={Deepseek-r1: Incentivizing reasoning capability in llms via reinforcement learning},
  author={Guo, Daya and Yang, Dejian and Zhang, Haowei and Song, Junxiao and Zhang, Ruoyu and Xu, Runxin and Zhu, Qihao and Ma, Shirong and Wang, Peiyi and Bi, Xiao and others},
  journal={arXiv preprint arXiv:2501.12948},
  year={2025}
}

@article{cai2024medusa,
  title={Medusa: Simple llm inference acceleration framework with multiple decoding heads},
  author={Cai, Tianle and Li, Yuhong and Geng, Zhengyang and Peng, Hongwu and Lee, Jason D and Chen, Deming and Dao, Tri},
  journal={arXiv preprint arXiv:2401.10774},
  year={2024}
}

@article{li2024eagle,
  title={Eagle: Speculative sampling requires rethinking feature uncertainty},
  author={Li, Yuhui and Wei, Fangyun and Zhang, Chao and Zhang, Hongyang},
  journal={arXiv preprint arXiv:2401.15077},
  year={2024}
}

@article{li2024eagle2,
  title={Eagle-2: Faster inference of language models with dynamic draft trees},
  author={Li, Yuhui and Wei, Fangyun and Zhang, Chao and Zhang, Hongyang},
  journal={arXiv preprint arXiv:2406.16858},
  year={2024}
}

@article{fu2024break,
  title={Break the sequential dependency of llm inference using lookahead decoding},
  author={Fu, Yichao and Bailis, Peter and Stoica, Ion and Zhang, Hao},
  journal={arXiv preprint arXiv:2402.02057},
  year={2024}
}

@article{chen2023accelerating,
  title={Accelerating large language model decoding with speculative sampling},
  author={Chen, Charlie and Borgeaud, Sebastian and Irving, Geoffrey and Lespiau, Jean-Baptiste and Sifre, Laurent and Jumper, John},
  journal={arXiv preprint arXiv:2302.01318},
  year={2023}
}

@inproceedings{leviathan2023fast,
  title={Fast inference from transformers via speculative decoding},
  author={Leviathan, Yaniv and Kalman, Matan and Matias, Yossi},
  booktitle={International Conference on Machine Learning},
  pages={19274--19286},
  year={2023},
  organization={PMLR}
}

@inproceedings{Nallapati_Zhou_dos,   
title={Abstractive Text Summarization Using Sequence-to-Sequence RNNs and Beyond},  url={http://dx.doi.org/10.18653/v1/k16-1028},  DOI={10.18653/v1/k16-1028},  booktitle={Proceedings of The 20th SIGNLL Conference on Computational Natural Language Learning},  author={Nallapati, Ramesh and Zhou, Bowen and dos Santos, Cicero and Gulcehre, Caglar and Xiang, Bing},  year={2016},  month={Jan},  language={en-US}  }

@article{Cobbe_Kosaraju_Bavarian_Hilton_Nakano_Hesse_Schulman_2021,   
title={Training Verifiers to Solve Math Word Problems},  journal={Cornell University - arXiv,Cornell University - arXiv},  author={Cobbe, Karl and Kosaraju, Vineet and Bavarian, Mohammad and Hilton, Jacob and Nakano, Reiichiro and Hesse, Christopher and Schulman, John},  year={2021},  month={Oct},  language={en-US}  }

@misc{grattafiori2024llama3herdmodels,
  title={The Llama 3 Herd of Models},
  author={Aaron Grattafiori and Abhimanyu Dubey and Abhinav Jauhri and Abhinav Pandey and Abhishek Kadian and Ahmad Al-Dahle and Aiesha Letman and Akhil Mathur and Alan Schelten and Alex Vaughan and Amy Yang and Angela Fan and Anirudh Goyal et al.},
  year={2024},
  eprint={2407.21783},
  archivePrefix={arXiv},
  primaryClass={cs.AI},
  url={https://arxiv.org/abs/2407.21783}
}

@article{shen2025speculative,
  title={Speculative Decoding via Hybrid Drafting and Rollback-Aware Branch Parallelism},
  author={Shen, Yuhao and Shen, Junyi and Kong, Quan and Liu, Tianyu and Lu, Yao and Wang, Cong},
  journal={arXiv preprint arXiv:2506.01979},
  year={2025}
}

@inproceedings{rest-he-2024,
  title={REST: Retrieval-Based Speculative Decoding},
  author={He, Zhenyu and Zhong, Zexuan and Cai, Tianle and Lee, Jason and He, Di},
  booktitle={Proceedings of the 2024 Conference of the North American Chapter of the Association for Computational Linguistics: Human Language Technologies (Volume 1: Long Papers)},
  pages={1582--1595},
  year={2024}
}

@article{spec-bench-xia-2024,
  title={Unlocking efficiency in large language model inference: A comprehensive survey of speculative decoding},
  author={Xia, Heming and Yang, Zhe and Dong, Qingxiu and Wang, Peiyi and Li, Yongqi and Ge, Tao and Liu, Tianyu and Li, Wenjie and Sui, Zhifang},
  journal={arXiv preprint arXiv:2401.07851},
  year={2024}
}

@article{mt-bench-zheng-2023,
  title={Judging llm-as-a-judge with mt-bench and chatbot arena},
  author={Zheng, Lianmin and Chiang, Wei-Lin and Sheng, Ying and Zhuang, Siyuan and Wu, Zhanghao and Zhuang, Yonghao and Lin, Zi and Li, Zhuohan and Li, Dacheng and Xing, Eric and others},
  journal={Advances in Neural Information Processing Systems},
  volume={36},
  pages={46595--46623},
  year={2023}
}

@article{humaneval-chen-2021,
  title={Evaluating large language models trained on code},
  author={Chen, Mark and Tworek, Jerry and Jun, Heewoo and Yuan, Qiming and Pinto, Henrique Ponde De Oliveira and Kaplan, Jared and Edwards, Harri and Burda, Yuri and Joseph, Nicholas and Brockman, Greg and others},
  journal={arXiv preprint arXiv:2107.03374},
  year={2021}
}

@article{yang2025qwen3,
  title={Qwen3 technical report},
  author={Yang, An and Li, Anfeng and Yang, Baosong and Zhang, Beichen and Hui, Binyuan and Zheng, Bo and Yu, Bowen and Gao, Chang and Huang, Chengen and Lv, Chenxu and others},
  journal={arXiv preprint arXiv:2505.09388},
  year={2025}
}

@article{taori2023alpaca,
  title={Alpaca: A strong, replicable instruction-following model},
  author={Taori, Rohan and Gulrajani, Ishaan and Zhang, Tianyi and Dubois, Yann and Li, Xuechen and Guestrin, Carlos and Liang, Percy and Hashimoto, Tatsunori B},
  journal={Stanford Center for Research on Foundation Models. https://crfm. stanford. edu/2023/03/13/alpaca. html},
  volume={3},
  number={6},
  pages={7},
  year={2023}
}

@article{singh2025openai,
  title={OpenAI GPT-5 System Card},
  author={Singh, Aaditya and Fry, Adam and Perelman, Adam and Tart, Adam and Ganesh, Adi and El-Kishky, Ahmed and McLaughlin, Aidan and Low, Aiden and Ostrow, AJ and Ananthram, Akhila and others},
  journal={arXiv preprint arXiv:2601.03267},
  year={2025}
}

@techreport{gemini2025modelcard,
  title       = {Gemini 3 Pro Model Card},
  author      = {{Google DeepMind Team}},
  year        = {2025},
  institution = {Google DeepMind},
  url         = {https://storage.googleapis.com/deepmind-media/Model-Cards/Gemini-3-Pro-Model-Card.pdf},
  type        = {Model Card}
}

@inproceedings{li2025eagle,
  title={Eagle-3: Scaling up inference acceleration of large language models via training-time test},
  author={Li, Yuhui and Wei, Fangyun and Zhang, Chao and Zhang, Hongyang},
  booktitle={Advances in Neural Information Processing Systems},
  year={2025},
  url={https://openreview.net/forum?id=4exx1hUffq}
}

@article{liu2026talon,
  title={TALON: Confidence-Aware Speculative Decoding with Adaptive Token Trees},
  author={Liu, Tianyu and Lv, Qitan and Shen, Yuhao and Sun, Xiao and Sun, Xiaoyan},
  journal={arXiv preprint arXiv:2601.07353},
  year={2026}
}

@article{hong2025inference,
  title={Inference-Cost-Aware Dynamic Tree Construction for Efficient Inference in Large Language Models},
  author={Hong, Yinrong and Tan, Zhiquan and Hu, Kai},
  journal={arXiv preprint arXiv:2510.26577},
  year={2025}
}

@article{wu2025tetris,
  title={TETRIS: Optimal Draft Token Selection for Batch Speculative Decoding},
  author={Wu, Zhaoxuan and Zhou, Zijian and Verma, Arun and Prakash, Alok and Rus, Daniela and Low, Bryan Kian Hsiang},
  journal={arXiv preprint arXiv:2502.15197},
  year={2025}
}

@article{liu2025turbospec,
  title={TurboSpec: Closed-loop Speculation Control System for Optimizing LLM Serving Goodput},
  author={Liu, Xiaoxuan and Park, Jongseok and Hu, Langxiang and Kwon, Woosuk and Li, Zhuohan and Zhang, Chen and others},
  journal={arXiv preprint arXiv:2406.14066},
  year={2025}
}

@inproceedings{huang2025adaspec,
  title={AdaSpec: Adaptive Speculative Decoding for Fast, SLO-Aware Large Language Model Serving},
  author={Huang, Kaiyu and Wu, Hao and Shi, Zhubo and Zou, Han and Yu, Minchen and Shi, Qingjiang},
  booktitle={Proceedings of the 2025 ACM Symposium on Cloud Computing},
  pages={361--374},
  year={2025}
}

@article{sadhukhan2024magicdec,
  title={Magicdec: Breaking the latency-throughput tradeoff for long context generation with speculative decoding},
  author={Sadhukhan, Ranajoy and Chen, Jian and Chen, Zhuoming and Tiwari, Vashisth and Lai, Ruihang and Shi, Jinyuan and Yen, Ian En-Hsu and May, Avner and Chen, Tianqi and Chen, Beidi},
  journal={arXiv preprint arXiv:2408.11049},
  year={2024}
}

@inproceedings{ligumiho,
  title={Gumiho: A Hybrid Architecture to Prioritize Early Tokens in Speculative Decoding},
  author={Li, Jinze and Xu, Yixing and Huang, Haiduo and Yin, Xuanwu and Li, Dong and Ngai, Edith CH and Barsoum, Emad},
  booktitle={Forty-second International Conference on Machine Learning},
  year={2025}
}

@inproceedings{anknerhydra,
  title={Hydra: Sequentially-Dependent Draft Heads for Medusa Decoding},
  author={Ankner, Zachary and Parthasarathy, Rishab and Nrusimha, Aniruddha and Rinard, Christopher and Ragan-Kelley, Jonathan and Brandon, William},
  booktitle={First Conference on Language Modeling}
}

@article{Liu2025Illusion,
  title={Speculative Decoding: Performance or Illusion?},
  author={Liu, Xiaoxuan and Yu, Jiaxiang and Park, Jongseok and Stoica, Ion and Cheung, Alvin},
  journal={arXiv preprint arXiv:2601.11580},
  year={2025}
}

@article{brown2024dynamic,
  title={Dynamic depth decoding: Faster speculative decoding for llms},
  author={Brown, Oscar and Wang, Zhengjie and Do, Andrea and Mathew, Nikhil and Yu, Cheng},
  journal={arXiv preprint arXiv:2409.00142},
  year={2024}
}

@article{hanley1982meaning,
  title={The meaning and use of the area under a receiver operating characteristic (ROC) curve.},
  author={Hanley, James A and McNeil, Barbara J},
  journal={Radiology},
  volume={143},
  number={1},
  pages={29--36},
  year={1982}
}

@inproceedings{xiao2023smoothquant,
  title={Smoothquant: Accurate and efficient post-training quantization for large language models},
  author={Xiao, Guangxuan and Lin, Ji and Seznec, Mickael and Wu, Hao and Demouth, Julien and Han, Song},
  booktitle={International conference on machine learning},
  pages={38087--38099},
  year={2023},
  organization={PMLR}
}

@article{lin2024awq,
  title={Awq: Activation-aware weight quantization for on-device llm compression and acceleration},
  author={Lin, Ji and Tang, Jiaming and Tang, Haotian and Yang, Shang and Chen, Wei-Ming and Wang, Wei-Chen and Xiao, Guangxuan and Dang, Xingyu and Gan, Chuang and Han, Song},
  journal={Proceedings of machine learning and systems},
  volume={6},
  pages={87--100},
  year={2024}
}

@misc{vicuna2023,
    title = {Vicuna: An Open-Source Chatbot Impressing GPT-4 with 90\%* ChatGPT Quality},
    url = {https://lmsys.org/blog/2023-03-30-vicuna/},
    author = {Chiang, Wei-Lin and Li, Zhuohan and Lin, Zi and Sheng, Ying and Wu, Zhanghao and Zhang, Hao and Zheng, Lianmin and Zhuang, Siyuan and Zhuang, Yonghao and Gonzalez, Joseph E. and Stoica, Ion and Xing, Eric P.},
    month = {March},
    year = {2023}
}

@article{wang2025opt,
  title={Opt-tree: Speculative decoding with adaptive draft tree structure},
  author={Wang, Jikai and Su, Yi and Li, Juntao and Xia, Qingrong and Ye, Zi and Duan, Xinyu and Wang, Zhefeng and Zhang, Min},
  journal={Transactions of the Association for Computational Linguistics},
  volume={13},
  pages={188--199},
  year={2025},
  publisher={MIT Press 255 Main Street, 9th Floor, Cambridge, Massachusetts 02142, USA~…}
}

@article{li2025adaserve,
  title={Adaserve: Slo-customized llm serving with fine-grained speculative decoding},
  author={Li, Zikun and Chen, Zhuofu and Delacourt, Remi and Oliaro, Gabriele and Wang, Zeyu and Chen, Qinghan and Lin, Shuhuai and Yang, April and Zhang, Zhihao and Chen, Zhuoming and others},
  journal={arXiv e-prints},
  pages={arXiv--2501},
  year={2025}
}

@article{zhou2023distillspec,
  title={Distillspec: Improving speculative decoding via knowledge distillation},
  author={Zhou, Yongchao and Lyu, Kaifeng and Rawat, Ankit Singh and Menon, Aditya Krishna and Rostamizadeh, Afshin and Kumar, Sanjiv and Kagy, Jean-Fran{\c{c}}ois and Agarwal, Rishabh},
  journal={arXiv preprint arXiv:2310.08461},
  year={2023}
}

@article{zeng2025glm,
  title={Glm-4.5: Agentic, reasoning, and coding (arc) foundation models},
  author={Zeng, Aohan and Lv, Xin and Zheng, Qinkai and Hou, Zhenyu and Chen, Bin and Xie, Chengxing and Wang, Cunxiang and Yin, Da and Zeng, Hao and Zhang, Jiajie and others},
  journal={arXiv preprint arXiv:2508.06471},
  year={2025}
}

@misc{saxena2023prompt,
    title = {Prompt Lookup Decoding},
    author = {Apoorv Saxena},
    year = {2023},
    month = {November},
    url = {https://github.com/apoorvumang/prompt-lookup-decoding/}
}

@article{liu2024pearl,
  title={Pearl: Parallel speculative decoding with adaptive draft length},
  author={Liu, Tianyu and Li, Yun and Lv, Qitan and Liu, Kai and Zhu, Jianchen and Hu, Winston and Sun, Xiao},
  journal={arXiv preprint arXiv:2408.11850},
  year={2024}
}

@article{liu2023online,
  title={Online speculative decoding},
  author={Liu, Xiaoxuan and Hu, Lanxiang and Bailis, Peter and Cheung, Alvin and Deng, Zhijie and Stoica, Ion and Zhang, Hao},
  journal={arXiv preprint arXiv:2310.07177},
  year={2023}
}

@article{shen2026double,
  title={Double: Breaking the Acceleration Limit via Double Retrieval Speculative Parallelism},
  author={Shen, Yuhao and Liu, Tianyu and Shen, Junyi and Wu, Jinyang and Kong, Quan and Huan, Li and Wang, Cong},
  journal={arXiv preprint arXiv:2601.05524},
  year={2026}
}
\bibliographystyle{iclr2026_conference}

\appendix
\clearpage

\section{Theoretical Analysis}
\label{app:theory}

This appendix provides full derivations for the two guarantees summarized in the main text (Theorem~\ref{thm:width_coverage} and Theorem~\ref{thm:exchange}). We keep the exposition aligned with the serving-centric formulation in Sec.~\ref{sec:problem} and the fixed verification cap in Eq.~\ref{eq:global_budget}.

\subsection{Coverage Gain via Width Expansion}
\label{app:theory_width}

\textit{ECHO} triggers ``Opportunistic Width Expansion'' only when depth extension is halted due to low confidence and a residual budget exists. Here we formalize the probabilistic advantage of this structural shift.

As noted in Gumiho~\citep{ligumiho}, the acceptance probability of a single path decays significantly as depth increases. At a truncation depth $d$ where the top-1 confidence is low, allocating the residual budget to width does not guarantee a longer accepted sequence immediately, but it \textbf{strictly increases the cumulative probability mass} that the ground-truth token is covered within the candidate set.

\setcounter{theorem}{0}
\begin{theorem}[Coverage Gain]
\label{thm:width_coverage_app}
Let $\mathcal{S}_k = \{x^{(1)}, \dots, x^{(k)}\}$ be the set of top-$k$ candidate tokens at depth $d$, sorted by probability $p_t(x \mid x_{<d})$.
Let $x^*$ be the ground-truth token sampled from $p_t$, and define the coverage probability as $\mathbb{P}(x^* \in \mathcal{S}) \triangleq \sum_{x \in \mathcal{S}} p_t(x)$.
Expanding the candidate set size from $k$ to $k'$ (where $k' > k$) yields a strictly positive gain in the coverage probability:
\begin{equation}
\mathbb{P}(x^* \in \mathcal{S}_{k'}) - \mathbb{P}(x^* \in \mathcal{S}_k)
= \sum_{i=k+1}^{k'} p_t(x^{(i)} \mid x_{<d}) > 0,
\label{eq:coverage_gain_app}
\end{equation}
whenever $\sum_{i=k+1}^{k'} p_t(x^{(i)} \mid x_{<d})>0$.
\end{theorem}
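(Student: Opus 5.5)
The claim follows directly from the definition of coverage as a sum of probability masses, together with the fact that the top-$k$ sets are nested. I would first check nestedness. The candidates $x^{(1)},x^{(2)},\dots$ come from one fixed ordering by $p_t(\cdot\mid x_{<d})$, with ties broken by some fixed rule so the ranking is well defined. Then $\mathcal{S}_k=\{x^{(1)},\dots,x^{(k)}\}\subseteq \mathcal{S}_{k'}$ for $k'>k$. Moreover, $\mathcal{S}_{k'}\setminus\mathcal{S}_k=\{x^{(k+1)},\dots,x^{(k')}\}$ is a disjoint union of $k'-k$ distinct tokens. I would also implicitly assume $k'$ does not exceed the vocabulary size, so that these tokens exist.

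Next I would expand the definition $\mathbb{P}(x^*\in\mathcal{S})=\sum_{x\in\mathcal{S}}p_t(x\mid x_{<d})$. Because sums over finite disjoint sets are additive, this gives
\begin{equation*}
\sum_{x\in\mathcal{S}_{k'}}p_t(x\mid x_{<d})=\sum_{x\in\mathcal{S}_{k}}p_t(x\mid x_{<d})+\sum_{i=k+1}^{k'}p_t(x^{(i)}\mid x_{<d}).
\end{equation*}
Subtracting the first term on the right yields the identity in Eq.~\ref{eq:coverage_gain_app}. Equivalently, since $x^*\sim p_t$, I could invoke countable additivity of the probability measure on the disjoint events $\{x^*=x^{(i)}\}$. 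This confirms that the coverage definition matches the actual probability that the sampled token lands in the set.

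For strict positivity, the gain equals the added mass, so it is $>0$ exactly under the stated hypothesis that this mass is positive. I would also note the unconditional fact that the gain is always $\ge 0$, since every $p_t(\cdot)\ge 0$. The ordering by probability is used only to make the sets nested and well defined. It additionally shows that the added mass is the largest achievable for $k'-k$ extra tokens, though the statement does not require this.

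The only potential obstacle is bookkeeping: ties in $p_t$, and the distinctness of the ranked tokens. A fixed deterministic tie-break resolves both, so I expect no genuine difficulty.
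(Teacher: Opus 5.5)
Your proposal is correct and takes essentially the same route as the paper: the identity follows directly from defining coverage as a sum of masses over the nested sets $\mathcal{S}_k\subseteq\mathcal{S}_{k'}$, and strict positivity is just the added mass being positive. The only difference is that the paper's sketch argues positivity from the nonzero entropy of $p_t$ at a low-confidence depth, whereas you take it directly from the stated hypothesis; yours is the cleaner choice, since nonzero entropy alone does not force any mass beyond rank $k$ when $k\ge 2$.
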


\begin{proof}[Proof Sketch]
Since the target distribution has non-zero entropy (implied by the low confidence that triggered truncation), the probability mass is distributed across multiple tokens (i.e., $p_t(x^{(i)} \mid x_{<d}) > 0$ for some $i > k$).
Equation~\ref{eq:coverage_gain_app} explicitly shows that widening the set accumulates these non-zero probabilities.
This probabilistic gain minimizes the risk of the target model rejecting the entire node, thereby acting as a ``safety net'' when depth extension is too risky.
\end{proof}

\noindent\textbf{Remark.}
Theorem~\ref{thm:width_coverage_app} formalizes why, once depth extension becomes unreliable, spending residual budget on width is a principled fallback: it improves \emph{coverage} (probability of including $x^*$) even if it does not guarantee immediate depth progress.

\subsection{Batch-Level Objective Under Compute-Bound Constraints}
\label{app:theory_objective}

This section supports the statement used in Theorem~\ref{thm:exchange} (main text): under saturated compute-bound serving with a fixed verification cap, improving end-to-end throughput reduces to improving the batch-level expected accepted tokens per iteration.

\paragraph{Setup.}
Consider a single SD iteration with a batch of $B$ requests. Request $i$ submits $K_i$ draft candidates, and the target model accepts $L_i$ tokens (random variable). The total verification load is
$
K_{\text{total}} = \sum_{i=1}^B K_i.
$
In compute-bound serving (Sec.~\ref{sec:problem}), verification latency grows approximately linearly with the verified token count (Eq.~\ref{eq:ver_linear}). \textit{ECHO} enforces a per-iteration cap (Eq.~\ref{eq:global_budget}) such that $K_{\text{total}} \le K_{\max}$, and the system is typically operated near saturation, i.e., $K_{\text{total}} \approx K_{\max}$.

\begin{proposition}[Compute-Bound Objective Reduction]
\label{prop:cb_reduction_app}
Under a fixed verification cap $K_{\text{total}} = K_{\max}$, the per-iteration latency is (approximately) constant. Consequently, maximizing end-to-end system throughput is equivalent to maximizing the \textbf{aggregate expected accepted length} per iteration:
\begin{equation}
\max \;\mathcal{J}
\;\triangleq\;
\max \sum_{i=1}^B \mathbb{E}[L_i].
\label{eq:goodput_objective_app}
\end{equation}
\end{proposition}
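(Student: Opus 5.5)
The plan is to write throughput as a ratio of accepted tokens to elapsed time, show that the denominator does not depend on the allocation once the cap is saturated, and conclude that only the numerator can be optimized. Over a long serving horizon of $N$ SD iterations, the number of tokens emitted for the batch is $\sum_{n=1}^{N}\sum_{i=1}^{B}(L_i^{(n)}+1)$. The elapsed time is $\sum_{n=1}^{N}\bigl(T_{\text{draft}}^{(n)}+T_{\text{ver}}(K_{\text{total}}^{(n)})\bigr)$. I will define throughput as the long-run ratio of these two quantities. I will then invoke a renewal-reward (or strong law of large numbers) argument to identify the limit with
\[
\frac{\sum_i \mathbb{E}[L_i]+B}{\mathbb{E}[T_{\text{iter}}]},
\]
where the expectation is taken over the per-iteration randomness under a stationary allocation policy.

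The second step is to show that $\mathbb{E}[T_{\text{iter}}]$ does not depend on the allocation $(K_1,\dots,K_B)$ once $K_{\text{total}}=K_{\max}$.
\begin{itemize}
\item For verification, I substitute $K_{\text{total}}=K_{\max}$ into Eq.~\ref{eq:ver_linear}. This gives $T_{\text{ver}}(K_{\max})=T_{\text{ar}}(1+\gamma[K_{\max}-K_{\max}]^+)=T_{\text{ar}}$. The key point is that this depends only on the sum $\sum_i K_i$ and not on how it is split across requests, because the target verifies the flattened super-tree in one packed forward pass (the Flatten \& Pack step).
\item For drafting, I will argue that the number of draft forward passes is bounded by the maximal depth, and that each pass processes the packed batch at cost governed by its token count. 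This makes $T_{\text{draft}}$ approximately constant as well, and otherwise it can be absorbed into an approximation term.
\end{itemize}
The result is that $\mathbb{E}[T_{\text{iter}}]=C$, a constant that is independent of the feasible allocation.

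With the denominator fixed at $C$ and $B$ fixed, throughput is the strictly increasing affine function $(\mathcal{J}+B)/C$ of $\mathcal{J}=\sum_i\mathbb{E}[L_i]$. Maximizing one is therefore equivalent to maximizing the other over the feasible set $\{\sum_i K_i=K_{\max}\}$, which establishes Eq.~\ref{eq:goodput_objective_app}. Theorem~\ref{thm:exchange} then follows as a corollary. By linearity, the swap $K_i\!\leftarrow\!K_i-1$, $K_j\!\leftarrow\!K_j+1$ changes $\mathcal{J}$ by $\Delta_j(K_j+1)-\Delta_i(K_i)>0$ and keeps the sum at $K_{\max}$, so it raises throughput.

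The main obstacle is the second step, namely justifying that per-iteration latency is genuinely allocation-independent. Eq.~\ref{eq:ver_linear} only supplies an approximation, and drafting cost could in principle vary with tree depth, since deep narrow trees need more sequential draft passes than wide shallow ones. To handle this, I would first state the result under the explicit modeling assumption that $T_{\text{iter}}$ is a function of $K_{\text{total}}$ alone, which is the compute-bound regime of Sec.~\ref{sec:problem}. I would then add a perturbation remark: if $|T_{\text{iter}}-C|\le\epsilon$ uniformly, the throughput of any $\mathcal{J}$-maximizing allocation is within a factor $(C-\epsilon)/(C+\epsilon)$ of optimal. This is why the statement says \emph{approximately}.
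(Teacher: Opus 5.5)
Your proposal is correct and follows the paper's argument. Like the paper, you write throughput as accepted tokens over iteration time, note that under the cap $K_{\text{total}}=K_{\max}$ the iteration time is (approximately) independent of the allocation because it is verification-dominated and depends only on $K_{\text{total}}$, and conclude that only the numerator $\sum_i \mathbb{E}[L_i]$ needs to be maximized. Your additions are sound refinements that make the paper's ``approximately'' precise rather than a different route: the long-run renewal-reward formulation, the $+B$ bonus-token term, making verification-dominance of $T_{\text{iter}}$ an explicit modeling assumption (the paper just asserts it), and the $(C-\epsilon)/(C+\epsilon)$ perturbation bound.
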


\begin{proof}
Throughput (accepted tokens per unit time) for an iteration is proportional to
$
\frac{\sum_{i=1}^B L_i}{T_{\text{iter}}}.
$
In the compute-bound regime, Eq.~\ref{eq:ver_linear} implies $T_{\text{iter}}$ is dominated by verification cost and scales with $K_{\text{total}}$. If the scheduler enforces a fixed total verified token count $K_{\text{total}}=K_{\max}$ (Eq.~\ref{eq:global_budget}), then $T_{\text{iter}}$ becomes (approximately) a constant for that iteration.
Therefore, maximizing expected throughput reduces to maximizing the expected numerator, i.e., $\sum_i \mathbb{E}[L_i]$.
\end{proof}

\paragraph{Why this matters.}
Proposition~\ref{prop:cb_reduction_app} implies that a policy can improve system performance even if it reduces the MAT of some requests, provided the batch-level aggregate $\sum_i \mathbb{E}[L_i]$ increases. This motivates elastic reallocation across requests in \textit{ECHO}.

\subsection{Marginal Utility Exchange and Budget Reallocation}
\label{app:theory_exchange}

We now formalize when reallocating a small amount of budget from one request to another improves the batch objective in Eq.~\ref{eq:goodput_objective_app}.

\paragraph{Per-request response curve.}
Let
\begin{equation}
f_i(k) \triangleq \mathbb{E}[L_i \mid K_i = k]
\label{eq:fi_def_app}
\end{equation}
be the expected accepted length for request $i$ given $k$ verified candidates. Define the marginal gain of the $k$-th candidate token as
\begin{equation}
\Delta_i(k) \triangleq f_i(k) - f_i(k-1).
\label{eq:delta_def_app}
\end{equation}
Intuitively, $\Delta_i(k)$ measures the incremental expected accepted tokens obtained by spending one additional verification token on request $i$.

\begin{theorem}[Marginal Utility Exchange]
\label{thm:exchange_app}
Consider two requests $i$ and $j$ sharing a fixed total budget $K$ within an iteration, with $\sum_{m=1}^B K_m = K_{\max}$.
If the marginal gain of request $j$ at its next token exceeds the marginal gain of request $i$ at its current last token:
\begin{equation}
\Delta_j(K_j + 1) \;>\; \Delta_i(K_i),
\label{eq:exchange_cond_app}
\end{equation}
then reallocating one token from $i$ to $j$ (i.e., $K_i \leftarrow K_i-1,\, K_j \leftarrow K_j+1$) strictly increases the batch objective
$
\mathcal{J}=\sum_{m=1}^B \mathbb{E}[L_m]
$
and therefore improves end-to-end throughput under compute-bound serving (by Proposition~\ref{prop:cb_reduction_app}).
\end{theorem}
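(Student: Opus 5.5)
The plan is a direct computation of the change in $\mathcal{J}$ under the one-token swap, followed by an appeal to Proposition~\ref{prop:cb_reduction_app}. First I fix the modelling assumption implicit in the definition of $f_i$: request $m$'s expected accepted length depends only on its own allocation $K_m$, so $\mathbb{E}[L_m]=f_m(K_m)$ and $\mathcal{J}(K_1,\dots,K_B)=\sum_{m} f_m(K_m)$. This separability is justified because each request's draft tree and acceptance process are verified independently within the flattened super-tree. I also require $K_i\ge 1$ so that the decrement is admissible and $\Delta_i(K_i)$ is defined.

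Next I compute the change in the objective. Let $\mathcal{J}'$ be the value after the swap. All terms with $m\neq i,j$ are unchanged, so
\begin{equation*}
\mathcal{J}'-\mathcal{J}=\bigl[f_j(K_j+1)-f_j(K_j)\bigr]-\bigl[f_i(K_i)-f_i(K_i-1)\bigr]=\Delta_j(K_j+1)-\Delta_i(K_i).
\end{equation*}
By the exchange condition~\eqref{eq:exchange_cond_app} this quantity is strictly positive. The total allocation is unchanged, since $(K_i-1)+(K_j+1)=K_i+K_j$, so the new allocation still satisfies $\sum_m K_m=K_{\max}$ and remains feasible under Eq.~\ref{eq:global_budget}.

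Finally I transfer the conclusion to throughput. Because $K_{\text{total}}$ is still exactly $K_{\max}$, Proposition~\ref{prop:cb_reduction_app} says the iteration latency $T_{\text{iter}}$ is (approximately) the same before and after the swap. Expected throughput is proportional to $\mathcal{J}/T_{\text{iter}}$, so a strict increase in $\mathcal{J}$ gives a strict increase in throughput, up to the same approximation that the proposition carries.

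The algebra is trivial, so the main obstacle is the separability step. If acceptance outcomes across requests interacted, for example through shared prefix caching, $\mathbb{E}[L_m]$ would not be a function of $K_m$ alone. I would handle this by stating separability explicitly as the interpretation of $f_i(k)=\mathbb{E}[L_i\mid K_i=k]$, which conditions only on $K_i$, and by noting that the throughput claim is only as exact as the approximate latency equality in Proposition~\ref{prop:cb_reduction_app}.
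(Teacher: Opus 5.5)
Your proposal is correct and matches the paper's proof. The paper computes $\Delta\mathcal{J} = [f_j(K_j+1)-f_j(K_j)] - [f_i(K_i)-f_i(K_i-1)] = \Delta_j(K_j+1)-\Delta_i(K_i)>0$ and then invokes Proposition~\ref{prop:cb_reduction_app} for the throughput claim, leaving implicit the separability $\mathbb{E}[L_m]=f_m(K_m)$ and the condition $K_i\ge 1$ that you state explicitly.
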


\begin{proof}
Let the original allocation be $(K_i, K_j)$ and the new allocation be $(K_i-1, K_j+1)$, with other $K_m$ unchanged.
The change in the batch objective is
\begin{align}
\Delta \mathcal{J}
&=
\bigl[f_j(K_j+1) - f_j(K_j)\bigr]
-
\bigl[f_i(K_i) - f_i(K_i-1)\bigr] \nonumber\\
&=
\Delta_j(K_j+1) - \Delta_i(K_i).
\label{eq:dJ_app}
\end{align}
Under condition~\eqref{eq:exchange_cond_app}, $\Delta \mathcal{J} > 0$, hence the batch objective strictly increases.
By Proposition~\ref{prop:cb_reduction_app}, this strictly improves throughput in the compute-bound regime with a fixed verification cap.
\end{proof}

\paragraph{Implication for \textit{ECHO}.}
Sparse gating in \textit{ECHO} is designed to identify low-confidence (hence low marginal-utility) continuations.
When confidence is low, the marginal gain $\Delta_i(\cdot)$ diminishes due to probability decay along deeper paths; truncating such branches reduces spending on low-utility tokens.
The freed budget can be (i) reallocated to other high-confidence requests for further depth extension, or (ii) used for width expansion at a truncation depth when no depth extension is available.
Both behaviors align with Theorem~\ref{thm:exchange_app}: moving budget from lower $\Delta$ to higher $\Delta$ increases the batch-level objective, even if the MAT of some truncated requests becomes shorter.

\newpage

\section{Extended Related Works}
\label{sec:related_work}

\subsection{Speculative Decoding}
SD exploits the parallel verification capability of Transformers to accelerate autoregressive generation~\citep{leviathan2023fast, chen2023accelerating,shen2025speculative}. The core paradigm involves a lightweight drafter proposing a sequence of tokens, which the target model verifies in a single forward pass.
Draft model-based methods utilize a smaller, separate model (often quantized or distilled) to generate candidates~\citep{zhou2023distillspec}. While effective, they require maintaining two separate models and synchronizing their vocabularies.
To mitigate this maintenance overhead, draft model-free methods have emerged. Approaches like Medusa~\citep{cai2024medusa} and EAGLE~\citep{li2024eagle, li2025eagle} attach auxiliary prediction heads to the target model's frozen layers to predict future tokens. Recently, Multi-Token Prediction (MTP)~\citep{zeng2025glm} co-trains these heads jointly with the main model.
Additionally, non-parametric methods leverage retrieval or matching logic without additional training. Notable examples include prompt lookup decoding~\citep{saxena2023prompt}, which reuses recurring phrases from the context, and retrieval-augmented SD~\citep{rest-he-2024,shen2026double}, which fetches candidates from external corpora.
While the aforementioned methods focus on the generation source of draft tokens, maximizing the verification efficiency further requires optimizing the structural organization of these candidates.

\subsection{Dynamic Token Tree}
To further improve the acceptance rate per verification step, tree-based speculation explores multiple candidate paths simultaneously~\citep{miao2024specinfer}.
Static tree methods rely on predetermined structures, where the tree shape (width and depth) is fixed manually or heuristically based on average acceptance rates~\citep{cai2024medusa, li2025eagle}. These methods prioritize high-probability tokens to form rigid layers but lack flexibility.
Dynamic tree methods aim to adapt the tree structure to the current generation context. Approaches like TALON~\citep{liu2026talon} and OPT-Tree~\citep{wang2025opt} construct larger trees or dynamic widths based on token entropy and probability, applying adaptive pruning to fit within a compute budget.
However, these methods typically employ dense, node-wise evaluation logic that incurs significant control overhead and often generates irregular (ragged) batch shapes incompatible with standard high-performance serving kernels.

\subsection{SD in High-Concurrency Scenarios}
With the rise of production LLM serving systems, research has expanded from single-batch acceleration to system-level scheduling.
Existing scheduling works can be broadly categorized into server-side approaches~\citep{fu2024break, liu2023online}, which focus on maximizing overall throughput and device utilization, and client-side approaches~\citep{liu2024pearl}, which optimize for user-perceived latency and fairness.
Recent works like TETRIS~\citep{wu2025tetris} and TurboSpec~\citep{liu2025turbospec} have begun to integrate SD into serving systems, optimizing draft token selection to balance inference speed with profitability.
However, most prior work treats the draft tree as a black box or ignores the compute-bound nature of verification in high-concurrency regimes. \textit{ECHO} bridges this gap by proposing a principled, serving-centric framework that jointly optimizes tree construction and budget scheduling under strict verification constraints.

\newpage

\section{Evaluation Details} \label{sec:appendix_eval}
For reproducibility, we discuss the experimental setup (Section~\ref{sec:experiments}) in detail and the source code of this project will be made available at a later time.

\subsection{Data Configurations}

In our experiments, we evaluate \textit{ECHO} using the following dataset settings. To ensure comprehensive coverage, our benchmarks span code generation, mathematical reasoning, summarization, and general instruction following: HumanEval~\citep{humaneval-chen-2021}, GSM8K~\citep{Cobbe_Kosaraju_Bavarian_Hilton_Nakano_Hesse_Schulman_2021}, CNN/DM~\citep{Nallapati_Zhou_dos}, Alpaca~\citep{taori2023alpaca}, and MT-Bench~\citep{mt-bench-zheng-2023} following the set of EAGLE-3. The maximum generation length for these tasks is set to 1024 tokens.

\subsection{Detailed Baselines}
\paragraph{Baselines and Implementation}
To strictly evaluate the effectiveness of \textit{ECHO} in production-grade serving, we benchmark it against a comprehensive set of competitive baselines, categorizing them into four distinct paradigms of speculative decoding:

\begin{itemize}[leftmargin=*, itemsep=0pt, topsep=2pt]
    \item \textbf{Standard SD}~\citep{chen2023accelerating,leviathan2023fast}: The fundamental \textit{draft-then-verify} framework. We utilize the same draft models as \textit{ECHO} but execute them sequentially without any tree-based parallel verification, serving as the baseline to measure the raw speedup gain over auto-regressive decoding.
    
    \item \textbf{Retrieval-based SD}: We compare against \textbf{Lookahead}~\citep{fu2024break}, a representative method that accelerates inference solely using the target model. Lookahead employs Jacobi iteration to generate multi-branch candidates without requiring a separate draft model, providing a reference for architecture-agnostic acceleration.
    
    \item \textbf{Training-based Methods}: This category represents the current state-of-the-art. We include MLP-based approaches like \textbf{Medusa}~\citep{cai2024medusa} and \textbf{Hydra}~\citep{anknerhydra}, which attach lightweight decoding heads to predict multiple tokens in parallel. Crucially, we select \textbf{EAGLE-3}~\citep{li2025eagle} as our primary static baseline. EAGLE-3 utilizes feature-level auto-regression with a multi-layer fusion mechanism and constructs a static draft tree with fixed depth and width. Comparing against EAGLE-3 allows us to directly demonstrate the advantages of \textit{ECHO}'s elastic budget scheduling over rigid geometric constraints in high-concurrency regimes.
    
    \item \textbf{Dynamic Tree Methods}: To assess the efficacy of our sparse gating strategy, we compare against \textbf{OPT-Tree}~\citep{wang2025opt} and \textbf{DDD}~\citep{brown2024dynamic}. These methods optimize the draft tree topology typically via dense, node-wise heuristics or "generate-then-prune" strategies. \textbf{Note on Fairness:} Since these methods were originally designed for EAGLE-2, we have re-implemented and adapted them to support the stronger EAGLE-3 backbone. This ensures that any performance gap is attributable to the tree scheduling policy (Dense Control vs. Sparse Gating) rather than the underlying draft model capability.
\end{itemize}

\subsection{Model Configurations}
To validate performance, we select state-of-the-art open-source model pairs followed by EAGLE3 such as the Vicuna (yuhuili/EAGLE3-Vicuna1.3-13B, lmsys/vicuna-13b-v1.3), LLaMA3.1 (yuhuili/EAGLE3-LLaMA3.1-Instruct-8B, meta-llama/Llama-3.1-8B-Instruct), LLaMA3.3 (yuhuili/EAGLE3-LLaMA3.3-Instruct-70B, meta-llama/Llama-3.3-70B-Instruct), Qwen3-8B (AngelSlim/Qwen3-8B-eagle3, Qwen/Qwen3-8B), Qwen3-32B (AngelSlim/Qwen3-32B-eagle3, Qwen/Qwen3-32B) and Qwen3-235B (lmsys/Qwen3-235B-A22B-EAGLE3, Qwen/Qwen3-235B-A22B) for each task. All model weights are loaded in bfloat16 format for optimized GPU inference without quantization. As a training-free method, \textit{ECHO} does not modify any draft model parameters during evaluation. We summarize the model configuration in Table~\ref{tab:model_conf}. 

\begin{table}[h]
\caption{Model configurations.}
\centering
\resizebox{0.6\columnwidth}{!}{  
\begin{tabular}{ccccc}
\toprule
    \textbf{Models} & \textbf{Layers} & \textbf{dim} & \textbf{FFN dim} & \textbf{Vocabulary size} \\
\midrule
Vicuna 68M & 2  & 768  & 3072  & 32000    \\
Vicuna 13B & 40  & 5120  & 13824  & 32000  \\
LLaMA-3.1 8B & 32 & 4096 & 14336  & 128256   \\
LLaMA-3.3 70B & 80 & 8192  & 28672  & 128256 \\
Qwen-3 8B & 36 & 4096  & 12288  & 151936 \\
Qwen-3 32B & 64 & 5120  & 25600 & 151936 \\
Qwen-3 235B & 94 & 4096  & 12288 & 151936 \\
\bottomrule
\end{tabular}
}
\label{tab:model_conf}
\end{table}

\subsection{Evaluation Details}
\textbf{Hardware and Framework.}
All experiments are conducted on 8 NVIDIA H100 (80GB) GPUs. For latency-critical benchmarks ($BS=1$), we utilize the HuggingFace \texttt{transformers} library. For high-concurrency scenarios evaluations ($BS>1$), we integrate all methods into \texttt{SGLang}, an industrial-grade inference engine, to accurately measure throughput under realistic kernel constraints. We use greedy sampling ($\text{temperature}=0$) for all methods to ensure deterministic reproducibility. Specifically, we test qwen235b in Figure.\ref{fig:batch_scaling} without \textit{CUDA-Graph} and other experiments in high batch size settings are all using the \textit{CUDA-Graph}.

\textbf{Adaptive Calibration.}
To ensure robustness and generalization across diverse model-dataset combinations, \textit{ECHO} incorporates a lightweight warm-up phase prior to evaluation. During this phase, we analyze the layer-wise acceptance distribution of the draft model to adaptively identify discriminative "sweet spots" and calibrate gating thresholds. This mechanism allows \textit{ECHO} to automatically tailor its sparse gating policy to the specific confidence landscape of each model. For instance, the calibrated thresholds $\tau_d$ at specific depths $d$ are set as follows: for LLaMA-3.1-8B, $\{d_0: 0.2, d_5: 0.35, d_8: 0.5\}$; and for the larger Qwen3-235B, $\{d_0: 0.15, d_3: 0.3, d_5: 0.5\}$.

\textbf{Configuration Protocols.}
We adopt distinct configuration strategies for different concurrency regimes:
\begin{itemize}[leftmargin=*, itemsep=0pt, topsep=2pt]
    \item \textbf{Low Concurrency ($BS=1$):} Both EAGLE-3 and \textit{ECHO} employ the default configuration (Tree Depth=8, Top-$k$=10, Total Tokens=60) to maximize the theoretical upper bound of acceptance length.
    \item \textbf{High Concurrency ($BS>1$):} EAGLE3 is configured with their respective default parameters. In contrast, \textit{ECHO} initializes with a streamlined configuration (Tree Depth=3, Top-$k$=3, Total Tokens=5) and activates its elastic scheduler. This allows \textit{ECHO} to dynamically adjust the depth and width allocation per request based on the real-time batch size and global verification budget, ensuring optimal resource utilization in compute-bound regimes.
\end{itemize}

\newpage

\section{Layer-wise Confidence Density Visualization for Each Model}

\subsection{Visualization of Layer-wise Confidence Shifts}
\label{sec:appendix_visualization}

We provide a comprehensive visualization of layer-wise confidence distributions to characterize the stochastic nature of drafting across varying architectures. As illustrated in Figures~\ref{fig:layer6} through \ref{fig:layer9} , we profile the probability density of \textcolor{red}{Accepted (Red)} and \textcolor{blue}{Rejected (Blue)} tokens up to \textbf{Depth=8} for the LLaMA series and \textbf{Depth=5} for the Qwen series.

\paragraph{The Stochastic Drift.}
These visualizations serve as the empirical foundation for \textit{ECHO}. A critical oversight in prior dynamic methods is the assumption of a static confidence threshold. The plots reveal a distinct \textbf{Entropy Drift}: at shallow depths (e.g., Depth 0-1), the distributions are bimodal with a clear decision boundary, accepted tokens cluster near probability $0.9$, while rejected tokens cluster near $0.1$. However, as the tree deepens, the confidence of valid tokens naturally decays, causing the red distribution to migrate leftward and merge with the blue distribution. This creates the \textbf{"Overlap Regions"} (visualized in purple/grey), high-entropy zones where binary classification becomes inherently ambiguous. Operating blindly in these regions leads to error accumulation, manifesting as either the premature pruning of valid paths (false negatives) or the wasteful extension of invalid ones (false positives).

\paragraph{Analysis of Distributional Dynamics.}
The density landscapes reveal distinct behaviors governed by model architecture and alignment:
\begin{itemize}[leftmargin=*]
    \item \textbf{Depth-Dependent Decay:} A universal trend is observed where the probability mass of the optimal path migrates from the high-confidence "Accepted" cluster toward the "Rejected" cluster as depth increases, blurring the separability.
    \item \textbf{Alignment Sensitivity:} The velocity of this drift correlates with the intrinsic draft-target alignment. Models with weaker alignment exhibit a more rapid transition into the ambiguous overlap zone at shallower depths.
\end{itemize}

\paragraph{Operationalizing the "Sweet Spot".}
This variance substantiates the design of our \textbf{Sparse Gating}. As explicitly annotated by the \textcolor{green}{\textbf{Green Dashed Lines}} in the figures, \textit{ECHO} identifies \textbf{"Sweet Spots"} layers where the overlap is minimal and the signal-to-noise ratio is maximized. Instead of applying a uniform threshold, \textit{ECHO} utilizes a warm-up phase to profile these specific distributional shifts. By adaptively triggering verification gates only at these identified "Sweet Spots" and dynamically calibrating thresholds ($\tau_d$) to the local distribution, \textit{ECHO} circumvents the high-entropy overlap zones, ensuring robust generalization across diverse model-dataset combinations.
\newpage
\begin{figure*}[h]
\begin{center}
\centerline{\includegraphics[width=0.9\columnwidth]{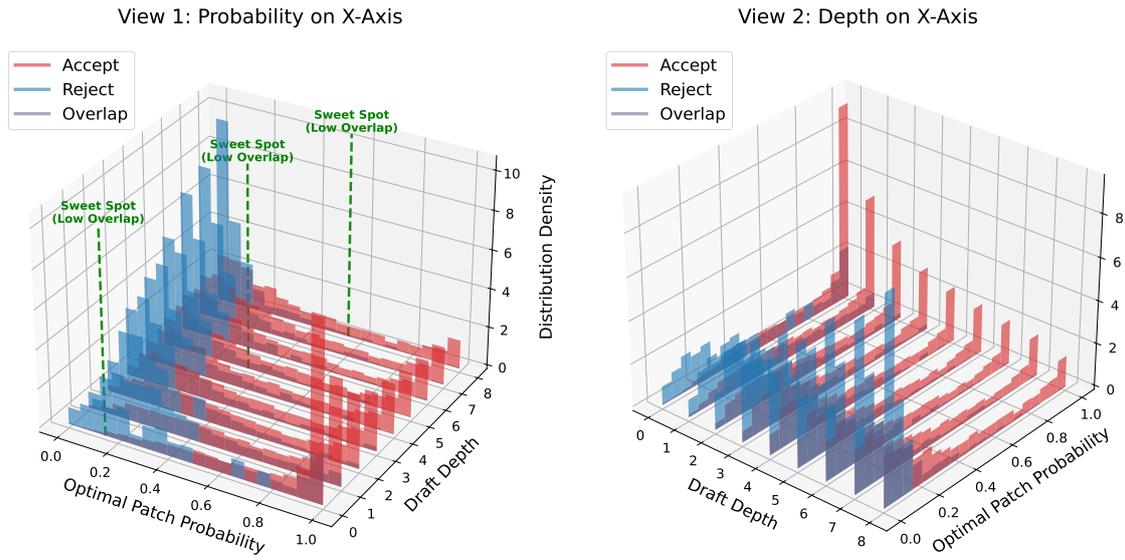}} 
\caption{Visualization of confidence distributions across
draft depths (LLaMA3.1-8B on MT-Bench).}
\label{fig:layer6}
\end{center}
\vspace*{-0.4in}
\end{figure*}

\begin{figure*}[h]
\begin{center}
\centerline{\includegraphics[width=0.9\columnwidth]{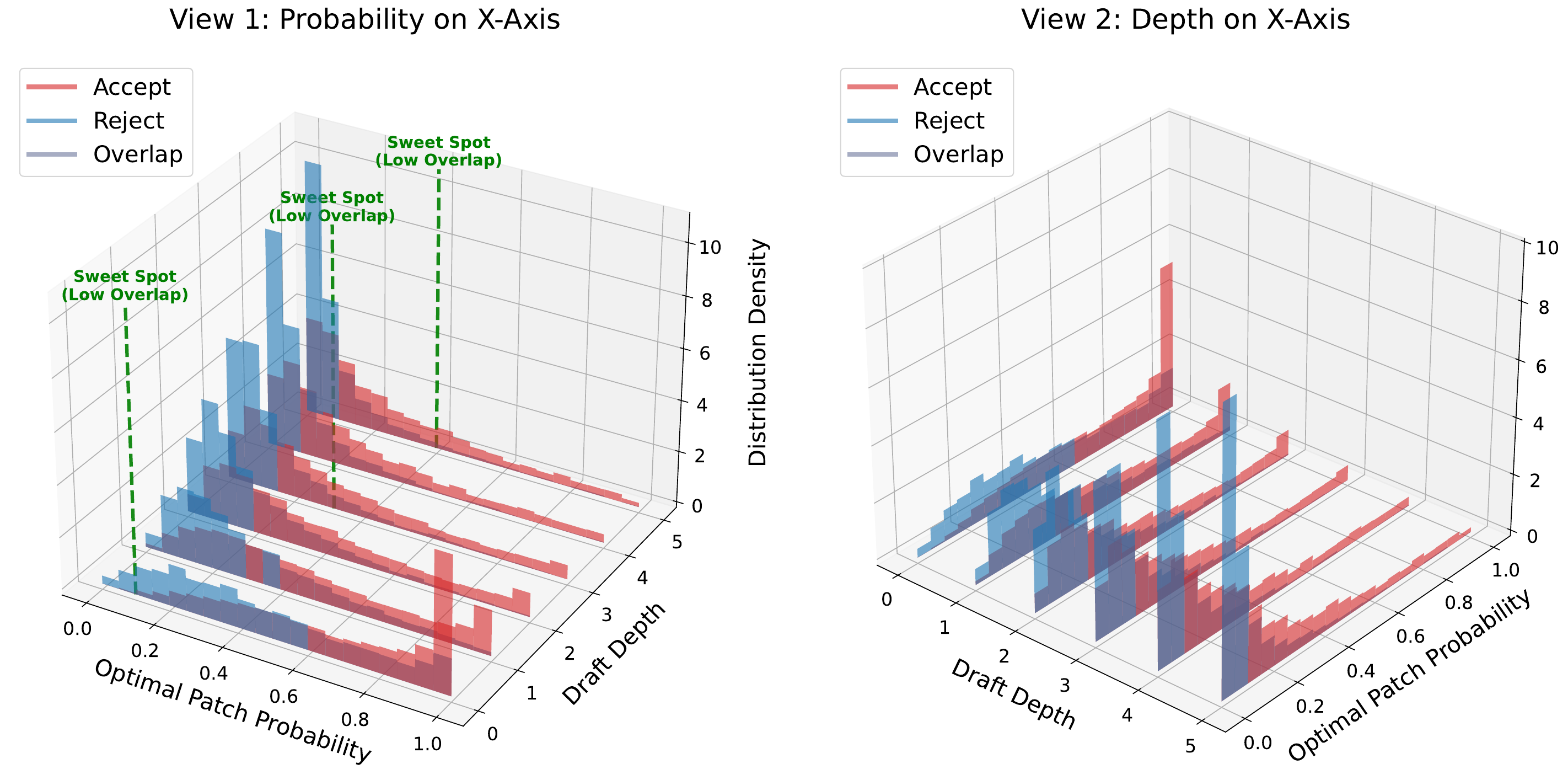}} 
\caption{Visualization of confidence distributions across
draft depths (Qwen3-8B on MT-Bench).}
\label{fig:layer7}
\end{center}
\vspace*{-0.4in}
\end{figure*}

\begin{figure*}[h]
\begin{center}
\centerline{\includegraphics[width=0.9\columnwidth]{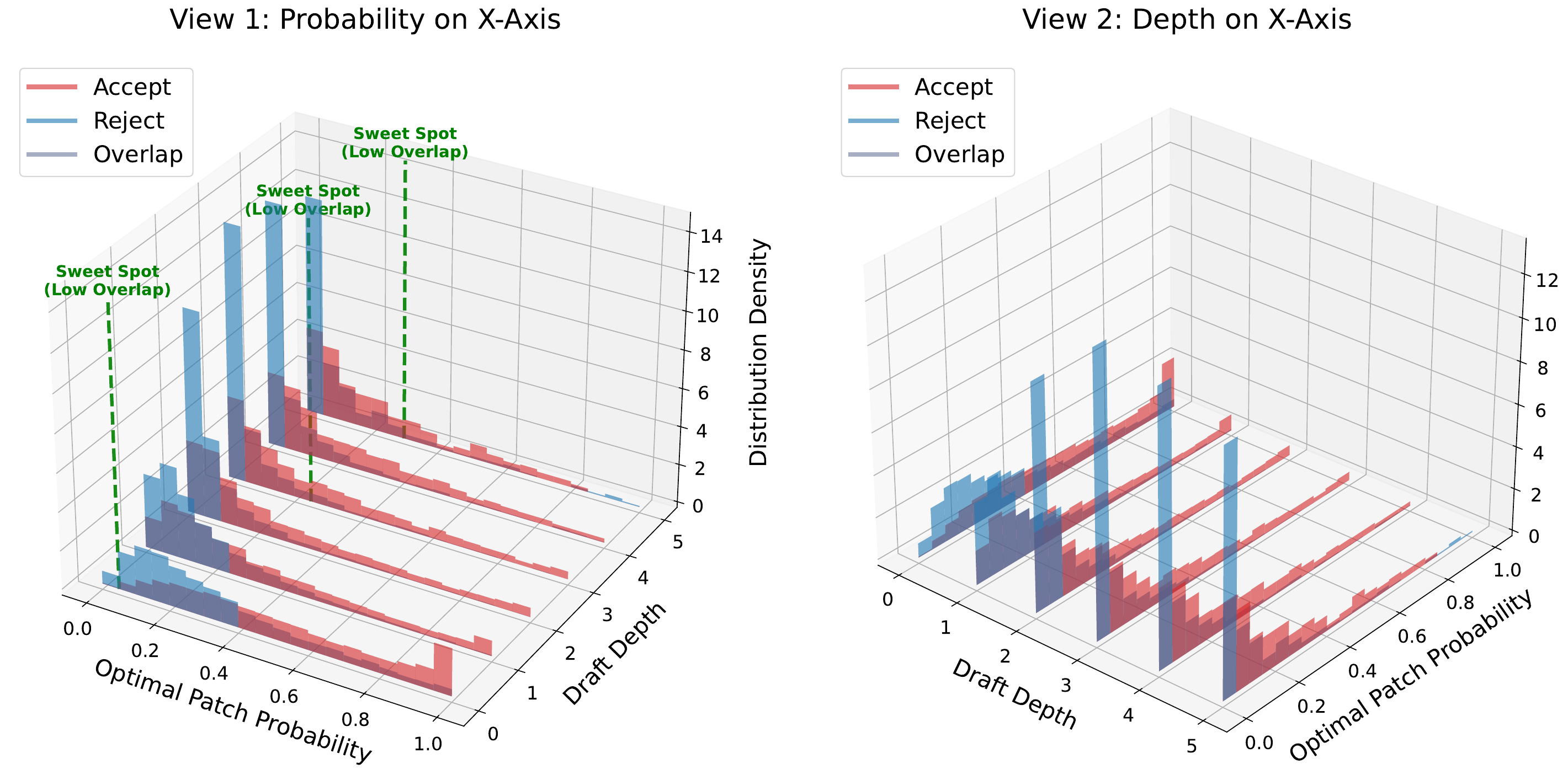}} 
\caption{Visualization of confidence distributions across
draft depths (Qwen3-32B on MT-Bench).}
\label{fig:layer8}
\end{center}
\vspace*{-0.2in}
\end{figure*}

\begin{figure*}[t]
\begin{center}
\centerline{\includegraphics[width=0.9\columnwidth]{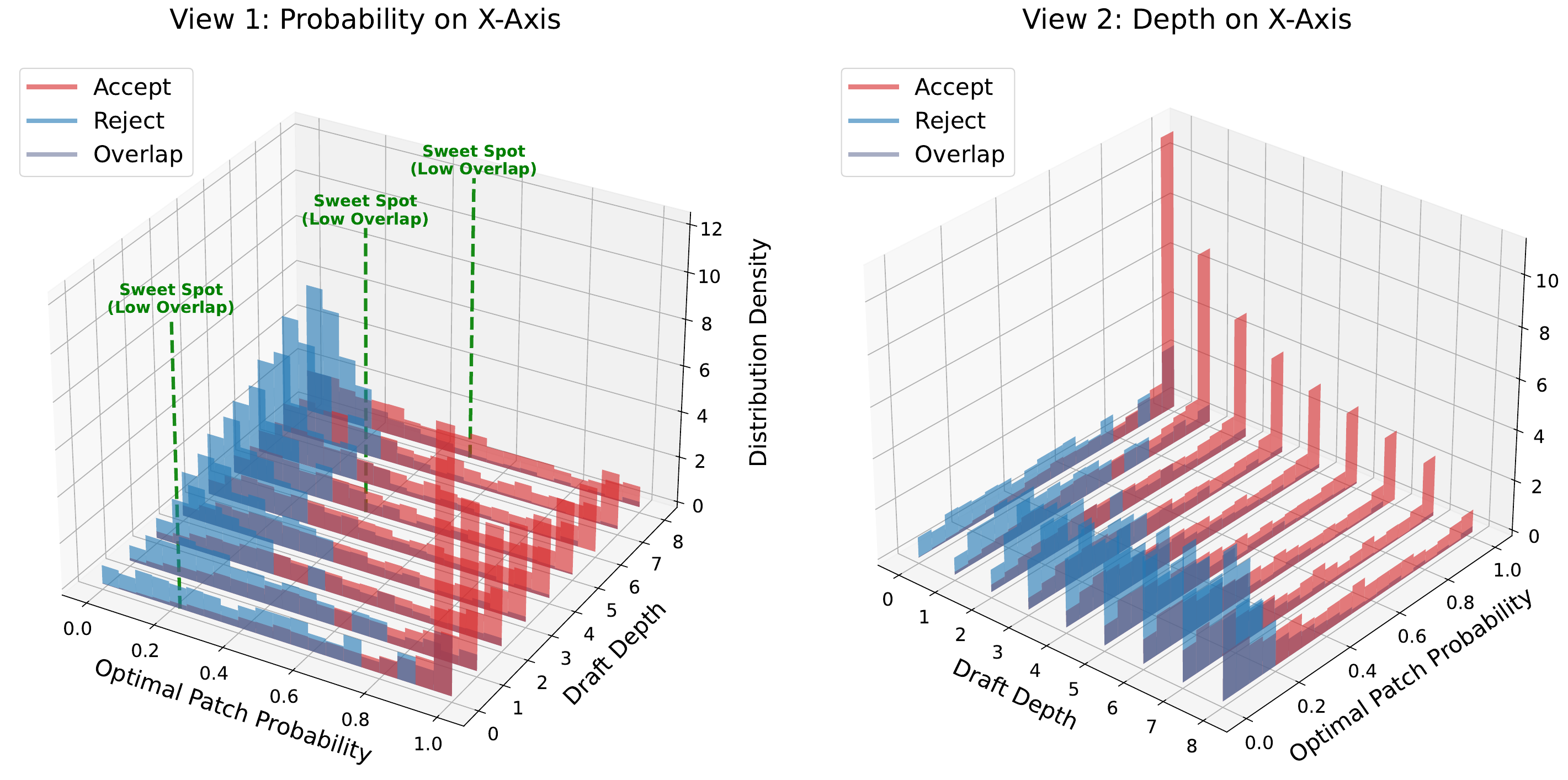}} 
\caption{Visualization of confidence distributions across
draft depths (LLaMA3.3-70B on MT-Bench).}
\label{fig:layer9}
\end{center}
\vspace*{-0.4in}
\end{figure*}


\end{document}